\let\accentvec\vec
\documentclass{llncs}

\usepackage[english]{babel}
\usepackage[T1]{fontenc}
\usepackage[utf8]{inputenc}
\usepackage{paralist}
\usepackage{url}
\usepackage{cite}

\usepackage{rotating}
\usepackage{graphicx}
\graphicspath{{./assets/}}
\DeclareGraphicsExtensions{.pdf}

\usepackage{float}
\usepackage[flushleft]{threeparttable}
\usepackage{framed}

\usepackage[colorinlistoftodos,textsize=tiny]{todonotes}

\let\spvec\vec
\let\vec\accentvec
\usepackage{amsmath}

\let\vec\spvec

\usepackage{amssymb} %
\usepackage[only,rrbracket,llbracket]{stmaryrd} %

\usepackage{algpseudocode}

\usepackage{listings}
\usepackage{color}
\definecolor{codegreen}{rgb}{0,0.6,0}
\definecolor{codegray}{rgb}{0.5,0.5,0.5}
\definecolor{codepurple}{rgb}{0.58,0,0.82}
\definecolor{backcolour}{rgb}{0.95,0.95,0.92}
\lstset{inputpath=assets}
\lstdefinestyle{my_style}{
    numberstyle=\tiny,%
    basicstyle=\footnotesize,
    breakatwhitespace=false,
    breaklines=false,
    keepspaces=true,
    numbers=left,
    numbersep=5pt,
    showspaces=false,
    showstringspaces=false,
    showtabs=false,
    tabsize=4,
    frame=single,
    xleftmargin=0.415cm,
    xrightmargin=0.13cm,
    numberbychapter=false
}

\usepackage{cancel}
\usepackage{eurosym}

\usepackage{multirow}

\usepackage{stackengine}
\usepackage{scalerel}

\usepackage[font=footnotesize,skip=0pt]{caption}

 \newcommand{\specialcellc}[2][c]{%
  \begin{tabular}[#1]{@{}c@{}}#2\end{tabular}}

\let\strong\textbf

\lstset{numbers=left,numberblanklines=false,escapeinside=||}
\let\origthelstnumber\thelstnumber
\makeatletter
\newcommand*\Suppressnumber{%
  \lst@AddToHook{OnNewLine}{%
    \let\thelstnumber\relax%
     \advance\c@lstnumber-\@ne\relax%
    }%
}

\newcommand*\Reactivatenumber[1]{%
  \lst@AddToHook{OnNewLine}{%
   \let\thelstnumber\origthelstnumber%
   \setcounter{lstnumber}{\numexpr#1-1\relax}%
  }%
}

\AtBeginDocument{}

\marginparwidth=8ex %

\newcommand{\range}{\mathrm{range}}
\newcommand{\Id}{\mathrm{Id}}
\newcommand{\kernel}{\mathrm{ker}}

\newcommand{\attack}[2]{\langle #1, #2 \rangle}
\newcommand{\isbydef}{\mathrel{\stackrel{\mbox{\tiny def}}{=}}} 
\newcommand{\know}[1]{K(#1)}

\title{Data Minimisation: a Language-Based Approach (Long Version)}

\author{Thibaud Antignac\inst{1} \and David Sands\inst{1} \and Gerardo Schneider\inst{2}}

\institute{Chalmers University of Technology, 412 96 G\"oteborg, Sweden.\\
\email{\{thibaud.antignac,dave\}@chalmers.se}
\and
Gothenburg University,
412 96 G\"oteborg, Sweden.\\
\email{gerardo@cse.gu.se}}

\begin{document}

\pagestyle{plain}
    
\maketitle

\begin{abstract}
Data minimisation is a privacy-enhancing principle considered as one of the pillars of personal data regulations. 
This principle dictates that personal data collected should be no more than necessary for the specific purpose consented by the user.
In this paper we study data minimisation from a programming language perspective.
We assume that a given program embodies the purpose of data collection, and define a data minimiser as a pre-processor for the input which reduces the amount of information available to the program without compromising its functionality.
In this context we study formal definitions of data minimisation, present different mechanisms and architectures to ensure data minimisation, and provide a procedure %
to synthesise a correct data minimiser for a given program.
\end{abstract}

\section{Introduction} %
\label{sec:introduction}

A standard interpretation of sequential programs is to interpret them as functions taking data as input, performing some computation and giving a result.
The result should somehow be dependent on the input, so provided enough ``good'' data is given as input the program should be able to perform computations and give the expected result in conformance with its functional specification.
There are many reasons for not providing more input data than needed.
Besides time and space, other non-functional requirements can benefit from minimisation of the input.
This is particularly relevant when processing personal, eventually sensitive, data.
According to the Article 5 of the \emph{General Data Protection Regulation} proposal (GDPR) ``Personal data must be [\dots] limited to what is necessary in relation to the purposes for which they are processed''~\cite{gdpr2012}.\footnote{The \emph{General Data Protection Regulation} (EU ---2016/679) was adopted on 27 April 2016, and it will enter into application 25 May 2018. Unlike a Directive it does not require any enabling legislation to be passed by governments.} 
Note that this principle exists in other regulations and is sometimes referred to as ``collection limitation'' when it focuses on the particular step of collecting data.
This is for instance the case of the \emph{Fair Information Practice Principles} (FIPPs)~\cite{fipp1973} in USA, and the \emph{Guidelines on the Protection of Privacy and Transborder Flows of Personal Data}~\cite{oecd2013} proposed by the Organisation for Economic Co-operation and Development (OECD).

If we consider data minimisation from the regulatory point of view, the input data not semantically used in the program should neither be collected nor processed, 
given that the adversary is the data processor\footnote{``Data controllers'' and ``data processors'' are legal roles used to define obligations and liabilities of the parties. We indistinctly use the term ``data processor'' in this paper as we are interested in designating the party that technically processes the data.} 
itself, having thus the possibility to exploit the inputs.\footnote{In other scenarios the adversary only has access to the outputs (cf.~\cite{smith2009}).}
As a consequence the attacker knows all the information available after the input is collected (before the program execution).

Given that \emph{input data = necessary data + extra data}, and since the program should execute equally well without any extra data, we have that  \emph{input data $\geq$ necessary data}.
The goal of the \emph{data minimisation process} is thus to minimise the input data so only what is necessary is given to the program.
Whenever the input data exactly matches what is necessary we may say that the minimisation is the best. Best minimisation is, however, difficult to achieve in general among other things because it is not trivial to exactly determine what is the input needed to compute each possible output.

Let us consider a simple program to exemplify the notion of data minimisation.
The purpose of the program in  Fig.~\ref{lst:benefits_simple_program} is to compute the \emph{benefit level} of employees depending on their salary (that we assume to be between \${$0$} and \${$100000$}).
For the sake of simplicity, in what follows we do not assume any particular distribution over their domain for the inputs. We thus drive our analysis on worst-case assumptions.
A quick analysis shows that the range of the output is $\left\{ \texttt{false}, \texttt{true} \right\}$, and consequently the data processor does not need to precisely know the real salaries of the employees to determine the benefit level. 
In principle each employee should be able to give any number between $0$ and $9999$ as input if they are eligible to the benefits, and any number between $10000$ and $100000$ otherwise, without disclosing their real salaries.

\begin{figure}[t!]
    \centering
    \lstinputlisting[style=my_style]{benefits_simple_program.java}
    \caption{Running example $P_\textit{bl}$ to compute a benefits level.}
    \label{lst:benefits_simple_program}
    \vspace*{-0.5cm}
\end{figure}

In this paper we address the issue of data minimisation with the following contributions:
\begin{itemize}
	\item We present different mechanisms and architectures to ensure data minimisation (Section~\ref{sec:the_how_and_the_when_of_data_minimisation}),
    \item We provide a formal definition of the concept of \emph{minimisers for a program} used to achieve data minimisation (Section~\ref{sec:data_minimisers}) with respect to an explicit attacker model (Section~\ref{sec:attacker_model}),
    \item We show how we can compare minimisers, leading to the definition of \emph{best minimiser} for a given program (Section~\ref{sec:comparing_pre_processors}), and
    \item We propose a (semi-)procedure to generate minimisers and prove its soundness and termination (Section~\ref{sec:procedure_for_data_minimisation}).
\end{itemize}
In addition, we discuss a proof-of-concept implementation to compute minimisers, based on symbolic execution and the use of a SAT solver, along with two examples (Section \ref{sec:implementation_and_examples}).
Finally we discuss related work (Section~\ref{sec:related_works}) before concluding and giving future research directions (Section~\ref{sec:conclusion}).

\section{The When, How, and Where of Data Minimisation}
\label{sec:the_how_and_the_when_of_data_minimisation}

Having informally introduced what minimality and minimisation are, the natural question arising is how to achieve it. 
Indeed, various mechanisms can be used for data minimisation, as detailed in Section~\ref{sub:minimisation_mechanisms} and can be performed in various architectures, as described in Section~\ref{sub:minimisation_architectures}.

\subsection{Minimisation Mechanisms}
\label{sub:minimisation_mechanisms}

The mechanisms used to test, check, verify, or enforce 
data minimisation appear at different stages of the software lifecycle as illustrated in Table~\ref{tab:enforcing_data_minimisation}.

\begin{sidewaystable}
\centering
\caption{Coverage, input, and output of minimisation mechanisms. }
\label{tab:enforcing_data_minimisation}
\begin{tabular}{|l|c|c|c|c|c|}
\hline
\multicolumn{1}{|c|}{\textit{Timing of mechanism}} & \textit{Kind of analysis}                           & \textit{Type of mechanism}    & \textit{Coverage}                                                             & \textit{Input}                                                                       & \textit{Output}                 \\ \hline \hline
\multirow{6}{*}{before deployment}                 & \multirow{2}{*}{static}                             & verification                  & \multirow{2}{*}{\specialcellc{all possible executions\\(symbolic execution)}} & \multirow{4}{*}{\specialcellc{white box program\\specification}}                     & minimality                      \\ \cline{3-3} \cline{6-6} 
                                                   &                                                     & synthesis                     &                                                                               &                                                                                      & minimiser                       \\ \cline{2-4} \cline{6-6} 
                                                   & \multirow{2}{*}{\specialcellc{static and\\dynamic}} & test                          & \multirow{4}{*}{\specialcellc{some possible executions\\(test strategy)}}     &                                                                                      & minimality                      \\ \cline{3-3} \cline{6-6} 
                                                   &                                                     & refinement                    &                                                                               &                                                                                      & minimiser                       \\ \cline{2-3} \cline{5-6} 
                                                   & \multirow{2}{*}{dynamic}                            & test                          &                                                                               & \multirow{2}{*}{\specialcellc{black box program\\specification}}                     & minimality                      \\ \cline{3-3} \cline{6-6} 
                                                   &                                                     & refinement                    &                                                                               &                                                                                      & minimiser                       \\ \hline
\multirow{6}{*}{online}                            & \multirow{4}{*}{\specialcellc{static and\\dynamic}} & \multirow{2}{*}{verification} & \multirow{4}{*}{\specialcellc{all executions\\(symbolic execution)}}          & \multirow{4}{*}{\specialcellc{white box program\\specification\\current execution}}  & \multirow{2}{*}{minimality}     \\
                                                   &                                                     &                               &                                                                               &                                                                                      &                                 \\ \cline{3-3} \cline{6-6} 
                                                   &                                                     & \multirow{2}{*}{synthesis}    &                                                                               &                                                                                      & \multirow{2}{*}{representative} \\
                                                   &                                                     &                               &                                                                               &                                                                                      &                                 \\ \cline{2-6} 
                                                   & \multirow{2}{*}{dynamic}                            & monitoring                    & \multirow{2}{*}{\specialcellc{some executions\\(previous executions)}}        & \multirow{2}{*}{\specialcellc{previous executions\\current execution}}               & minimality                      \\ \cline{3-3} \cline{6-6} 
                                                   &                                                     & refinement                    &                                                                               &                                                                                      & representative                  \\ \hline
\multirow{5}{*}{offline}                           & \multirow{3}{*}{\specialcellc{static\\and log}}     & \multirow{3}{*}{verification} & \multirow{3}{*}{\specialcellc{all executions\\(symbolic execution)}}          & \multirow{3}{*}{\specialcellc{white box program\\specification\\all execution logs}} & \multirow{5}{*}{minimality}     \\
                                                   &                                                     &                               &                                                                               &                                                                                      &                                 \\
                                                   &                                                     &                               &                                                                               &                                                                                      &                                 \\ \cline{2-5}
                                                   & \multirow{2}{*}{log}                                & \multirow{2}{*}{monitoring}   & \multirow{2}{*}{\specialcellc{some executions\\(previous executions)}}        & \multirow{2}{*}{all execution logs}                                                  &                                 \\
                                                   &                                                     &                               &                                                                               &                                                                                      &                                 \\ \hline
\end{tabular}
\end{sidewaystable}

\subsubsection*{Before deployment}

Three kinds of analysis to perform data minimisation or check minimality can be used 
before deployment: purely static, purely dynamic, or both static and dynamic.
Purely static mechanisms are used to perform a verification of the minimality property or a synthesis of a minimiser, ie., a pre-processor of the input which feeds minimalised data to the program(such as the one presented in Figure~\ref{lst:benefits_simple_minimiser} for the example of $P_\textit{bl}$ given in introduction in Figure~\ref{lst:benefits_simple_program}).
The guarantee provided by such a static analysis is extremely strong as it covers all the possible executions.
To achieve this the source code of the program is needed along with its specification.
However, static analysis can be expensive and may not alway be applicable.
In that case one may turn to more lightweight techniques.

At the other end of the spectrum, it is possible to use purely dynamic mechanisms to perform tests of the minimality property or to refine a pre-processor to make it more minimising.
In this case, the test strategy will choose some possible inputs and test whether or not some other inputs invalidate the minimality property.
For example, we could test $P_\textit{bl}$ with some input values,
such as $\texttt{0}$, $\texttt{1000}$, and $\texttt{10000}$; the input values $\texttt{0}$ and $\texttt{1000}$ lead to the same output and would thus reveal that the program has not been fully minimised.
However, an absence of common ouputs does not prove the program has been minimised. 
Giving weaker guarantees than purely static techniques, purely dynamic techniques only need a black box executable program to run tests (along with its specification).

A mix between static and dynamic analysis can be used by performing a static analysis on a set of dynamically selected inputs (following a test strategy).
For example, it would be possible to perform a formal verification only for input value $\texttt{0}$ for $P_\textit{bl}$.
In this case, a result as good as the one obtained with the purely static approach could be obtained for this particular input, returning the set of values $\{0, \dots, 9999\}$ (all giving the same output).
However, it does not give any guarantee for the other input values.
This approach gives strong guarantees on the inputs covered, but does not cover the full domain of program inputs (contrary to purely static approaches). 

After deployment, both online and offline mechanisms can be run.

\subsubsection*{Online}

Concerning online mechanisms, only a mix of static and dynamic analysis or a purely dynamic approach can be used.
Indeed, at this stage, it becomes meaningless to perform a purely static analysis as such an analysis is needed only once (it covers all the possible executions and thus is better run before deployment as explained previously).
A mixed analysis is based on the current execution.
In this case, it is still possible to reach strong guarantees about minimality or to get an input representative, ie., another input value hiding the the actual input value while still preserving correctness, which ensures minimality.
Indeed, while the purely static approach consisted in generating a minimiser returning a representative for all the possible inputs, only the representative corresponding to the current execution is computed here.
Returning to the example of $P_\textit{bl}$, a client whose salary is of $\texttt{8000}$ will execute the minimisation mechanism before disclosing this data.
Thus, a representative (for instance, $\texttt{0}$) will be fixed at this time to hide the actual salary to the server without changing the result of the computation.
As a consequence, the computation is more efficient but still relies on a white box program along with its specification and data relative to the current execution.

If such conditions are not acceptable, it is also possible to get better efficiency at the cost of weaker guarantees by using a purely dynamic approach.
This consists in testing whether or not the previous executions with similar inputs led to the same output, meaning the minimality property is not reached.
Returning again to our example, let us say the salary is to be disclosed every month and there has already been a disclosure of $\texttt{7000}$ the previous month.
Instead of disclosing its current salary of $\texttt{8000}$, the client will reuse the previously disclosed value of $\texttt{7000}$ as it leads to the same output value.
In this case, it has to be highlighted that even if the test or the refinement is performed against all the previous executions, the minimality property cannot be guaranteed as there might be more information leaking without it being witnessed by the past behaviours.

Instead of relying on previous executions, the dynamic analysis could also rely on arbitrary values, following a test strategy, as depicted for mechanisms run before deployment.
In this case, an access to the black box program is necessary instead of the previous executions.
For the sake of simplicity, we do not show this case in the Table~\ref{tab:enforcing_data_minimisation} as it does not appear to make an interesting case in practice.

\subsubsection*{Offline}

Finally, offline mechanisms rely on execution logs to audit minimality. 
They also allow to get guarantees as strong as previously in the mixed static and log analysis.
As for the previous kinds of analysis, this requires a white box program along with its specification.
In this case, all the behaviours belonging to the execution log are verified individually in a way very similar to the online mechanism.

A weaker analysis, similar to the online dynamic one, can also be performed here, only relying on logs.
This monitoring approach only consider all the executions recorded in the logs and are only able to find witnesses of non-minimality.
For instance, if two executions have been logged disclosed inputs $\texttt{7000}$ and $\texttt{8000}$ (both leading to the same output), the auditor knows that more data than needed has been disclosed.

It could also be possible here to monitor each past execution against arbitrary values given by a monitoring strategy (similarly to test strategies).
This would require an access to the black box program and would give similarly weak results.
This possibility has not been added to the Table~\ref{tab:enforcing_data_minimisation} as it does not seem useful in practice.

Data minimisation itself can only be performed before deployment or online.
Indeed, after execution, an offline approach is only able to detect breaches in minimality.
As such, it is not a proper privacy-by-design practice and should not be used alone.
However, this approach is still useful as it is the more likely to be used in already existing systems (as it would be extremely ambitious to hope all systems managing personal data to be redesigned according to privacy-by-design guidelines).
Another use of online and offline mechanisms is their application by an external entity, such as a data protection authority, to perform sporadic controls (with the caveat that such controls could be subject to attacks depending on prior mechanisms run on data by the server).

The way that these mechanisms are applied may depend on the system architecture.
Thus they may provide different guarantees as discussed below.

\subsection{Minimisation Architectures}
\label{sub:minimisation_architectures}

The kind of minimality and minimisation achievable by these mechanisms depend on when and where they are performed as shown in Table \ref{tab:achievement_data_minimisation}.

\begin{table*}[htbp]
\centering
\caption{Kind of minimality applicable and privacy property targeted by minimisation mechanisms in a distributed architecture. }
\label{tab:achievement_data_minimisation}
\begin{tabular}{|l|c|c|c|}
\hline
\multicolumn{1}{|c|}{\specialcellc{\textit{Timing of}\\ \textit{mechanism}}} & \specialcellc{\textit{Location of}\\ \textit{mechanism}} & \specialcellc{\textit{Kind of minimality}\\ \textit{applicable}} & \specialcellc{\textit{Privacy property}\\ \textit{targeted}}       \\ \hline \hline
\multirow{3}{*}{before deployment}                 & clients                        & distributed                            & \multirow{2}{*}{collection minimisation} \\ \cline{2-3}
                                                   & 3rd party                      & \multirow{2}{*}{monolithic}            &                                          \\ \cline{2-2} \cline{4-4} 
                                                   & server                         &                                        & use minimisation                         \\ \hline
\multirow{3}{*}{online}                            & clients                        & distributed                            & \multirow{2}{*}{collection minimisation} \\ \cline{2-3}
                                                   & 3rd party                      & \multirow{2}{*}{monolithic}            &                                          \\ \cline{2-2} \cline{4-4} 
                                                   & server                         &                                        & use minimisation                         \\ \hline
\multirow{3}{*}{offline}                           & clients                        & distributed                            & \multirow{3}{*}{breach detection}        \\ \cline{2-3}
                                                   & 3rd party                      & \multirow{2}{*}{monolithic}            &                                          \\ \cline{2-2}
                                                   & server                         &                                        &                                          \\ \hline
\end{tabular}
\end{table*}

For mechanisms run before deployment or online, different kinds of minimality are applicable and different privacy properties can be targeted depending on the architecture.

\subsubsection*{Kind of Minimality}

Depending on the architecture, different kinds of minimalities can be achieved, denoted {\em monolithic} minimality and {\em distributed} minimality.
Monolithic minimality is a minimality property expressed in the context where knowledge about all input values is accessible at a centralised location where the mechanism can be applied.
On the other hand, distributed minimality is used for the cases where the input values and the associated mechanism are distributed, in which case the mechanism has only the knowledge available at its location.
If the mechanism is performed by a client then only monolithic minimality can be achieved.
On the other hand, if the mechanism is performed by a third-party or by the server, then monolithic minimality is achieved as there is full knowledge about input values.

\subsubsection*{Privacy Properties}

The kind of privacy property that can be achieved depends on when in the data lifecycle the methods are deployed. 
{\em Collection minimisation} prevents data from being collected by the server (this is the case when the client itself, or a trusted third-party, performs the minimisation).
On the other hand, {\em use minimisation} is only able to limit the use of the data (this is the case when the server itself is also in charge of performing the minimisation\footnote{One could argue that it is already too late in this case, but the server could belong to a multi-server organisation, thus minimising potential subsequent disclosures (not treated in this paper).}).

We note that the only way to implement 
collection minimisation while ensuring monolithic minimality in a distributed setting is by using a third party to perform the minimisation.
Indeed, it is not possible to reach a property stronger than distributed minimality when the minimisation is applied client-side. 

Figure~\ref{fig:architecture} gives an illustration of monolithic vs. distributed minimisation in the case of a static minimiser generated before deployment, and run by the client (thus guaranteeing collection minimisation). %

\begin{figure}[t!]
    \centering
     \includegraphics[width=1\linewidth]{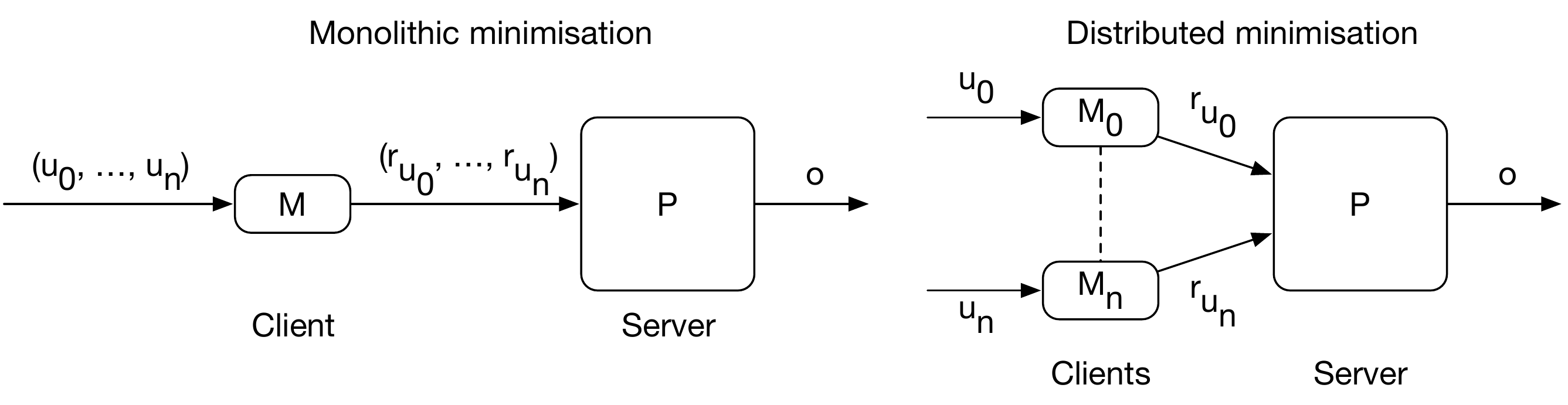}
    \caption{Monolithic and distributed minimisation architectures.}
    \label{fig:architecture}
    \vspace*{-0.5cm}
\end{figure}

In the following, we detail the attacker model used to formally characterise the notion of minimality.
In the following, unless otherwise stated, will focus on the static mechanism built before deployment for verification purposes; this is the most complex case, and 
 the other mechanisms can be seen as weakened versions of this one.

\section{Attacker Model}
\label{sec:attacker_model}

In order to define data minimisation we have to 
consider an
explicit attacker model. For example, if the computation performed on
the data is really just the intended program $P$, then why worry about
minimisation? In practice a malicious%
\footnote{In security literature, a malicious attacker is an attacker which does not follow the protocol while an attacker only attempting at extracting more information from what he got is called honest-but-curious. In the privacy context, the (security) honest-but-curious attacker cannot be considered as honest because personal data should not be used for another purpose than the one specified and consented for by the data subject. As a consequence, this is also considered as a malicious behaviour.} server may have a \emph{secondary use} for
the data (defined as ``using personal data in a different way than the primary reason why they were collected''~\cite{bussard2011}).
We define here an attacker model having the following components: 
(i) an explicit model of a hidden secondary use of the data, and 
(ii) an explicit model of what the attacker learns about the data from observing the result of processing the data. 
We will develop a definition of 
minimisation which guarantees that an attacker learns no more about the
private data than he can deduce from the legitimate output of the
program.  

To model the hidden secondary use, we suppose the attacker uses a second program denoted by a function
$h \in \mathcal{I} \to \mathcal{O'}$. Thus the attacker is defined by
a pair of functions $p$ and $h$ (the legitimate use and the hidden use of the
data, respectively), and the attacker's computation is given by the
function $\attack{p}{h} \in \mathcal{I} \to
\mathcal{O} \times \mathcal{O'}$ defined by
 \[
\attack{p}{h}(i) \isbydef (p(i),h(i))  %
 \]

The goal of the attacker is to learn something about the inputs by observing the
output of $\attack{p}{h}$. In the following section we will show that if the input is first
minimised using a best possible minimiser, then what the attacker learns from $\attack{p}{h}$ is
no more than what is learned by ${p}$ alone.
To do this we
model the attacker knowledge explicitly, as has become more common in
recent formulations of the semantics of information flow (e.g. \cite{Askarov:Sabelfeld:Gradual,balliu2011epistemic,askarov2012}).

\begin{definition}
    [Knowledge]\label{def:knowledge}
Let $u \in \mathcal{I}$ be an input, and $f$ a function in $\mathcal{I} \to
\mathcal{O}$. 
    We say the \emph{knowledge} about $u$ obtained through $f$ by
    observing $f(u)$, written $\know{f,u}$, is defined to be the set
    $\{ v ~ | ~ f(u) = f(v), v \in \mathcal{I} \}$.
\end{definition}

Thus $\know{f,u}$ is the largest set of possible input values that
could have led to the output value $f(u)$. This corresponds to an
attacker who knows the function $f$ and the output $f(u)$ and makes
perfect logical deduction about the possibles values of the input. 
Note that the smaller the knowledge
set, the less uncertainty the observer has over the possible value of
the input. 

The knowledge relation induces an order on programs: the bigger the knowledge set for a given input, the lesser a program will disclose information about the input. This is reasonable given that if more input values are mapped to the same output, more difficult it will be to guess what is the exact input when observing the output of a computation. We have then the following order between programs:

\begin{definition}[Disclosure ordering]\label{def:disclosure}
    We say $f$ {\em discloses less or as much information as} $g$, written $f \sqsubseteq g$, iff for any $u \in \mathcal{I}$ we have $\know{g,u} \subseteq \know{f,u}$.
    \label{def:exkmin}
\end{definition}

If $f$ and $g$ are knowledge equivalent, we write $f \equiv g$. 
For example, functions denoted by $(\cdot\, \mathrm{mod} 2)$ and $(\cdot\, \mathrm{mod} 4)$ (with postfix notation) are related by $(\cdot\, \mathrm{mod} 2) \sqsubseteq (\cdot\, \mathrm{mod} 4)$ (knowing the last bit discloses less information than knowing the last two bits).
However, $(\cdot\, > 0)$ is incomparable to either of them.

\section{Data Minimisers}
\label{sec:data_minimisers}

In this section we develop a definition of a \emph{minimiser} for a program. %
We will consider two types of minimiser depending on how the data is
collected: \emph{monolithic} and \emph{distributed}.
Monolithic collection describes the simple case when the personal data is collected from a single location (e.g. a single data subject).
On the other hand, distributed collection occurs when personal data is collected from different sources.
In the latter case we assume these sources do not interact to achieve a multi-party data minimisation.
This assumption is important as it has a strong impact to the kind of minimisers which may be feasible in practice.

The execution of a program consists in the computation of an output, depending on the inputs provided.
The domain of possible inputs can be arbitrarily large and complex.
Running a data minimiser before a program is intended to capture the idea that the
program will be given less data.
In the best case, only the data actually needed to produce the result will be provided. %
Thus it implies that
the system does not receive more sensitive data than it
needs. 

We will address the whole data minimisation process by the following steps: (i) data minimiser generation (from the program, performed client-, server-, or third-party-side), (ii) data minimisation \emph{per se} (ie. data minimiser execution, performed client-side), (iii) data collection (ie. representative transmission to the server, performed client-side), and (iv) program execution (performed server-side). Note that a minimiser only needs to be generated only once for a given program: the same can be used for all clients. This section and the following one are mainly concerned with the first of these steps (the other steps being simple executions and communications). The improvement in the minimality of a given program is achieved by constructing a \emph{minimiser} $M$. %

In the following we assume the data processor to be the malicious attacker depicted in Section~\ref{sec:attacker_model} and the data sources to be as trustable as they would be without data minimisation.
The data minimiser could be sent as part of the protocol for data collection. 
This may be done by a third-party or by the data processor itself.
Thus we assume that the data source knows how much data is minimised on its side, but may not be able to check the optimality of the whole minimiser.
A third-party provider for the data minimiser could help to get this assurance.
Note, however, that the fact that the specification is being determined by the data processor itself does not weaken (nor strengthen) the security and trust assumptions of the interactions between the parties.

\subsection{Minimisation Semantics}
\label{sub:minimisaton_semantics}

\begin{figure}[t!]
    \centering
    \lstinputlisting[style=my_style]{syntactic_necessity.java}
    \caption{Example of a semantically unnecessary syntactic necessity.}
    \label{lst:syntactic_necessity}
    \vspace*{-0.5cm}
\end{figure}

Let us consider the program in Fig.~\ref{lst:syntactic_necessity} to discuss the need for a semantic definition of minimisation.
In this program, $\texttt{x2}$ is syntactically needed to evaluate the condition (l.3).
However, this condition always evaluates to $\texttt{true}$.
In the same way, $\texttt{x3}$ is not semantically needed to compute the value of $\texttt{y}$ since it is both added and substracted (l.4).
As a consequence, it would be possible to get the same result by rewriting the program with only the input $\texttt{x1}$ without modifying its behavior.
If $\texttt{x2}$ and $\texttt{x3}$ are personal data, then the semantic approach is better likely to help us to limit the collection than the syntactic approach.
So, the program could be refactored by taking only the input $\texttt{x1}$ while retaining the same output behaviour.
Though this would work, this approach requires a change in both the data collected and in the interface of the program.
 
Instead, we propose to keep the program unchanged and we rely on the idea that the information behind $\texttt{x2}$ and $\texttt{x3}$ (in this specific example) can be protected by providing \emph{fixed arbitrary} values for them, instead of refactoring the program.
This means the program does not need to be modified for the data processor to propose better guarantees.
This approach allows a better modularity and a better integration in the software development cycle.
To see this, let us consider the program shown in Fig.~\ref{lst:benefits_simple_program}.
In this case, any employee earning less than {\$}$10000$ can disclose any figure between $0$ and $9999$ and any employee earning at least {\$}$10000$ can disclose any figure between $10000$ and $100000$ without affecting the output of the program.
Thus a corresponding \emph{data minimiser} could be as shown in Fig.~\ref{lst:benefits_simple_minimiser}, where the representative values are taken to be $\texttt{0}$ and $\texttt{10000}.$ 

\begin{figure}[t!]
    \centering
    \lstinputlisting[style=my_style]{benefits_simple_minimiser.java}
    \caption{Minimiser for $P_\textit{bl}$ (see Figure~\ref{lst:benefits_simple_program}).}
    \label{lst:benefits_simple_minimiser}
    \vspace*{-0.5cm}
\end{figure}

The value output by this minimiser, which is the value of $\texttt{repr\_salary}$, can then be provided as input to the corresponding program (standing for the variable $\texttt{salary}$).
The behaviour of the program will remain the same while the actual salary of the employee is not needlessly passed to the program.
This holds even when an employee earns exactly the amount of $\texttt{repr\_salary}$ since the data processor receiving this value cannot distinguish when it is the case or not.

Following the approach about data minimality, we study in turn both monolithic and distributed cases as shown in Fig.~\ref{fig:architecture} where $M$ and $\bigotimes_{i=0}^n M_i$ are a monolithic and a distributed minimiser respectively (each $M_i$ being a local minimiser), $P$ is the program, $u_i$ are input values, $r_{u_i}$ are input representatives, and $o$ are outputs. We then discuss about security properties achieved through composition of minimisers and programs.

\subsection{Monolithic Case}
\label{sub:minimiser_monolithic_case}

Let us assume a program $P$ constituting a legitimate processing of sensitive (personal) data for which the data subject had consented to the purpose.
We abstract it through a function $p$ which ranges over $\mathcal{I} \to \mathcal{O}$. 
Since we aim at building {\em pre-processors} for the program, we consider a minimiser $m$ with type $\mathcal{I} \to \mathcal{I}$.

As stated previously, monolithic collection refers to the case when there is only one data source from which data is collected before being processed.
This data source can thus minimise the data disclosed while having full knowledge about the input that are provided to the processor.
Thus the output of a data minimiser $M$ can be directly input to the program $P$.

\begin{definition}
    [Monolithic minimiser]
    We say $m$ is a \emph{monolithic minimiser} for $p$ iff (i) $p\, \circ\, m = p$ and (ii) $m\, \circ\, m = m$.
    \label{def:pre-processor_mono}
\end{definition}

Condition (i) on the above definition is basically correctness: it states that using the pre-processor does not change the result of the computation. Condition (ii) ensures that $m$ chooses representative inputs in a canonical way. Put it another way, $m$ has the job of removing information from the input.
This idempotency condition states that $m$ removes all the information in one go.

\subsection{Distributed Case}
\label{sub:minimiser_distributed_case}

In general a computation over private data may require data collected from several independent sources.
We will thus extend our definition to the distributed setting.
This is the general case where a program $\textit{dp}$ is a function of a product of input domains such that $\textit{dp} \in \prod_{i=0}^n \mathcal{I}_i \to \mathcal{O}$. The idea for the distributed minimiser will be to have a \emph{local minimiser} $m_i \in \mathcal{I}_i \to \mathcal{I}_i$ for each source $\mathcal{I}_i$, combined into an input processor as $\textit{dm} = \bigotimes_{i=0}^n m_i$, where for $f \in A \to A'$ and $g \in B \to B'$ we have $f \otimes g \in A \times B \to A' \times B'$.
This is based on the assumption that each argument of $\textit{dp}$ is provided by a different data source and hence requires its own data processing.

\begin{definition}
    [Distributed minimiser]
    We say $\textit{dm}$ is a \emph{distributed minimiser} for $\textit{dp}$ iff (i) $\textit{dm}$ is a monolithic minimiser for $\textit{dp}$ and (ii) $\textit{dm}$ has the form $\bigotimes_{i=0}^n \textit{dm}_i$ (with $\textit{dm}_i$ being $\mathcal{I}_i \to \mathcal{I}_i$ functions (called local minimisers, for $i \in \{0, \dots, n \}$)).
    \label{def:pre-processor_dist}
\end{definition}

The first condition ensures that $\textit{dm}$ is actually a minimiser while the second condition (ii) ensures that each input is treated independently.
This second condition is the key differentiator between monolithic and distributed minimisers. Each function $\textit{dm}_i$ is only provided with an input from $\mathcal{I}_i$. Intuitively, this means that if two input values at a given position belong to the same equivalence class, then it has to be the case that this holds for all values possibly provided for the other positions. For example, if the values $0$ and $1$ from $\mathcal{I}_0$ belong to the same equivalence class, then each pair of input tuples $\langle 0, x, y \rangle$ and $\langle 1, x, y \rangle$ for any $(x, y) \in \prod_{i=1}^2 \mathcal{I}_i$ have to belong to the same equivalence class (two by two, meaning these can be different classes for each pair: $\langle 0, 1, 2 \rangle$ and $\langle 1, 1, 3 \rangle$ may not belong to the same equivalence class, though $\langle 0, 1, 2 \rangle$ and $\langle 1, 1, 2 \rangle$ have to). %
This is formally stated in Proposition \ref{prop:proper_distribution} below, which relies on the definition of the \emph{kernel} of a function. 
\begin{definition}
    [Kernel of a function]\label{def:kernel}
If $f \in \mathcal{I} \rightarrow \mathcal{O}$ then the {\em kernel of} $f$ is defined as 
$\ker(f) = \{ (u,v) \mid f(u) = f(v) \}$.
\end{definition}
So, the kernel $\ker(f)$ induces a partition of the input set where every element in the same equivalence class is mapped to the same output.

\begin{proposition}%
\label{prop:proper_distribution}
    If $m$ is a monolithic minimiser for $p$, then $m$ is a distributed minimiser for $p$ iff 
    for all $(\vec u, \vec v) \in \kernel (m)$, for all input positions $i$, and all input vectors $\vec w$, \[
(\vec w[i \mapsto u_i], \vec w [i \mapsto u_i]) \in \kernel (m)
\]
where the notation $\vec w[i \mapsto u_i]$ denotes a vector like $\vec w$ except at position $i$ where it has value $u_i$. 
\end{proposition}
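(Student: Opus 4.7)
The plan is to read the right-hand condition as the assertion that $\ker(m)$ factors as a product of per-position equivalence relations (interpreting the second coordinate on the right-hand side as $v_i$, since otherwise the condition is trivially reflexive), and to translate this structural property in and out of a coordinate-wise decomposition of $m$ itself.

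The forward direction is a direct computation. Assuming $m=\bigotimes_j m_j$ and $(\vec u,\vec v)\in\ker(m)$, we immediately get $m_j(u_j)=m_j(v_j)$ for every $j$; in particular $m_i(u_i)=m_i(v_i)$. Then $m(\vec w[i\mapsto u_i])$ and $m(\vec w[i\mapsto v_i])$ agree coordinate-by-coordinate (on positions $j\neq i$ both compute $m_j(w_j)$, and on position $i$ they compute $m_i(u_i)=m_i(v_i)$), so their pair lies in $\ker(m)$.

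For the backward direction, the plan is to manufacture the per-position structure promised by the hypothesis. Define $R_i\subseteq\mathcal{I}_i^2$ by $(x,y)\in R_i$ iff $(\vec w[i\mapsto x],\vec w[i\mapsto y])\in\ker(m)$ for some, equivalently (by the hypothesis) every, $\vec w$. Each $R_i$ is an equivalence relation, inheriting reflexivity, symmetry, and transitivity from $\ker(m)$. Applying the hypothesis to swap coordinates one at a time and chaining via transitivity of $\ker(m)$ yields $\ker(m)=\prod_i R_i$: one containment is immediate by instantiating the condition $n+1$ times with $\vec u$ as context; the other follows by a finite chain of single-coordinate swaps between $\vec u$ and $\vec v$. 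Choosing a canonical $R_i$-representative for each element of $\mathcal{I}_i$ produces local minimisers $m_i:\mathcal{I}_i\to\mathcal{I}_i$, and $\bigotimes_i m_i$ is then idempotent with kernel $\prod_i R_i=\ker(m)$. Because the monolithic condition $p\circ m=p$ is equivalent to $\ker(m)\subseteq\ker(p)$, the composite $\bigotimes_i m_i$ satisfies the same correctness equation and is thus a distributed minimiser for $p$.

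The step I expect to be the real obstacle is reconciling this construction with the literal statement: what we build is a distributed minimiser with the same kernel as $m$, rather than a proof that the given $m$ itself factors as a product. Idempotency alone does not force the choice of representative at position $i$ to depend only on the $i$-th input coordinate (one can construct an idempotent $m$ with product kernel whose representatives are still coupled across positions), so the natural reading of the proposition is up to knowledge equivalence. I would therefore phrase the final step by observing that any two monolithic minimisers with the same kernel are interchangeable as data minimisers in the sense of Definition~\ref{def:disclosure}, so existence of a product-form minimiser with $\ker(m)=\prod_i R_i$ is exactly what the statement is asking for.
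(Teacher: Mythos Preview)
The paper states this proposition without proof, so there is nothing to compare your argument against line by line. That said, your analysis is sound and in fact uncovers a genuine defect in the proposition as written.

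Your reading of the displayed pair as $(\vec w[i\mapsto u_i],\vec w[i\mapsto v_i])$ is clearly the intended one; the printed version is trivially reflexive. The forward direction is exactly the coordinate-wise computation you give. For the backward direction, your construction of the per-position equivalence relations $R_i$ and the verification that $\ker(m)=\prod_i R_i$ via a chain of single-coordinate swaps is the right idea and matches the spirit of the paper's later use of the proposition (and of the product-of-equivalence-relations machinery in Appendix~A).

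The obstacle you flag is real, not cosmetic. Idempotency of $m$ together with $\ker(m)=\prod_i R_i$ does \emph{not} force $m$ to factor as $\bigotimes_i m_i$. A concrete witness on $\{0,1,2\}\times\{0,1\}$: take $\ker(m)$ to be the product of the partition $\{\{0,1\},\{2\}\}$ with the identity on the second factor, and let $m$ send the class $\{(0,0),(1,0)\}$ to $(0,0)$ but $\{(0,1),(1,1)\}$ to $(1,1)$; this $m$ is idempotent with product kernel yet the first output coordinate depends on the second input coordinate, so $m$ is not of product form. Hence the ``iff'' is false if read literally as a statement about $m$ itself. Your proposed repair---treat the equivalence as holding up to knowledge equivalence, i.e.\ up to the choice of section of $\ker(m)$---is the natural one, and it is consistent with how the paper actually uses the proposition downstream (to justify building \emph{some} product-form minimiser from per-coordinate equivalence classes).

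One small point of phrasing: you write that $p\circ m=p$ is equivalent to $\ker(m)\subseteq\ker(p)$. The forward implication holds unconditionally; the converse uses idempotency (since $m\circ m=m$ gives $(\vec u,m(\vec u))\in\ker(m)\subseteq\ker(p)$). You have idempotency everywhere you need it, so the argument is correct, but it is worth making that dependence explicit when you invoke the equivalence for the newly constructed $\bigotimes_i m_i$.
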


This proposition gives a data-based characterisation of \emph{data minimisation}.
This will be useful when building minimisers in Section~\ref{sec:procedure_for_data_minimisation}.
Before building minimisers, we explain how to compare them in the next section.

\section{Best Minimisers}
\label{sec:comparing_pre_processors}

Now that we defined minimisers as pre-processors modifying the input, we see that there may exist many different minimisers for a given program. Thus we are interested in being able to compare these minimisers. Indeed, since the identity function is a minimiser for any program $p$, then it is clear that the simple existence of a minimiser does not guarantee any kind of minimality. 
One way to compare minimisers is to compare the size of their ranges -- a smaller range indicates a greater degree of minimality (cf.~Proposition \ref{prop:kpost} below).
But a more precise way to compare them is by understanding them in terms of the \emph{lattice of equivalence relations} \cite{oystein1942}, which has been used to model dependency properties of programs in several works, e.g. \cite{Hunt:PhD,Landauer:Redmond:CSFW93,sabelfeld2001}. 

The set of equivalence relations on $\mathcal{I}$ forms a complete lattice, with the ordering relation given by set-inclusion of their defining sets of pairs. 
The identity relation (which we denote $\Id_\mathcal{I}$) is the bottom element, and the total relation (which we denote $\mathrm{All}_\mathcal{I}$) is the top.

The following proposition provides some properties about the order relation between programs (cf.~Definition \ref{def:disclosure}), including its relation with the kernel.

\begin{proposition}
	[Disclosure ordering properties]
    \label{prop:kpost}
    \begin{compactenum}
        \item\label{prop:one} $f \sqsubseteq g$ iff $\ker(g) \subseteq \ker(f)$
        \item $f \circ g \sqsubseteq g$
        \item $ \attack{f}{f} \sqsubseteq f $
        \item $f \sqsubseteq \attack{f}{g}$
        \item $f \sqsubseteq g$ implies $|\range(\textit{f})| \leq |\range(\textit{g})|$
    \end{compactenum}
\end{proposition}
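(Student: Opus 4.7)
The plan is to reduce everything to the observation that $\know{f,u}$ is exactly the equivalence class of $u$ under the equivalence relation $\ker(f)$. Once that bridge is made, each claim becomes a short manipulation of kernels, and part (\ref{prop:one}) makes the rest routine by letting us re-express $\sqsubseteq$ purely as reverse set-inclusion of kernels.

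For part (\ref{prop:one}), I would argue both directions. For the forward direction, suppose $f \sqsubseteq g$ and $(u,v) \in \ker(g)$; then $v \in \know{g,u} \subseteq \know{f,u}$, so $f(u)=f(v)$, i.e.\ $(u,v) \in \ker(f)$. For the reverse direction, if $\ker(g) \subseteq \ker(f)$ and $v \in \know{g,u}$, then $(u,v)\in\ker(g)\subseteq\ker(f)$, so $v\in\know{f,u}$. With (\ref{prop:one}) in hand, the remaining items become kernel inclusions.

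For (2), if $g(u)=g(v)$, then applying $f$ gives $(f\circ g)(u)=(f\circ g)(v)$, so $\ker(g) \subseteq \ker(f\circ g)$. For (3), observe $\attack{f}{f}(u)=(f(u),f(u))$, so $\attack{f}{f}(u)=\attack{f}{f}(v)$ iff $f(u)=f(v)$; hence $\ker(f)=\ker(\attack{f}{f})$, which gives $\attack{f}{f} \sqsubseteq f$ (in fact $\equiv$). For (4), if $(f(u),g(u))=(f(v),g(v))$ then in particular $f(u)=f(v)$, so $\ker(\attack{f}{g}) \subseteq \ker(f)$.

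For (5), I would use the standard fact that $|\range(f)|$ equals the number of equivalence classes of $\ker(f)$, since $f$ descends to a bijection from $\mathcal{I}/\ker(f)$ onto $\range(f)$. If $\ker(g) \subseteq \ker(f)$, the partition induced by $\ker(g)$ refines the one induced by $\ker(f)$, so has at least as many classes; thus $|\range(f)| \leq |\range(g)|$. The only mild subtlety is that $\mathcal{I}$ may be infinite, but the cardinality inequality still holds because refinement yields a surjection from $\mathcal{I}/\ker(g)$ onto $\mathcal{I}/\ker(f)$. No step looks like a serious obstacle; the main task is just to explicitly state the correspondence between $\know{f,\cdot}$ and the equivalence classes of $\ker(f)$ up front so that (\ref{prop:one}) can carry the rest of the proof.
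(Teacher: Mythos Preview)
Your argument is correct in every part. The paper actually states Proposition~\ref{prop:kpost} without proof, so there is nothing to compare against; your approach---identifying $\know{f,u}$ with the $\ker(f)$-class of $u$, establishing (\ref{prop:one}) first, and then reducing (2)--(5) to routine kernel inclusions---is exactly the natural way to verify these claims and is what the authors evidently regarded as straightforward enough to omit.
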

where the notation $\attack{f}{g}$ a function denoting a mapping of an argument to each one of $f$ and $g$ as defined in Section~\ref{sec:attacker_model}.

\subsection{Monolithic Case} %
\label{sub:monolithic}

In order to reason about minimisers we need to be able to compare them. 
The ordering relation between programs defined above may be specialised to minimisers: $\textit{m}$ is a \emph{better minimiser} for $\textit{p}$ than $\textit{n}$ is iff (i) $\textit{m}$ and $\textit{n}$ are minimisers for $\textit{p}$ and (ii) $\textit{m} \sqsubseteq \textit{n}$.
Of particular interest is the minimiser that fulfill its purpose in the {\it best} way:

\begin{definition}
    [Best monolithic minimiser]
    We say that $m$ is a \emph{best monolithic minimiser} for $p$ iff (i) $m$ is a monolithic minimiser for $p$ and (ii) $m \sqsubseteq n$ for any $n$ a minimiser for $p$.
    \label{def:best_minimiser_mono}
\end{definition}

A minimiser carrying less information than the best minimiser could not convey all the information necessary for the program $p$ to keep the same behaviour under all inputs.
In this simple (monolithic) form, minimisation is easily understood as injectivity:
\begin{proposition}
    [Monolithic best minimiser injectivity]
    A monolithic minimiser $m$ for a program $p$ is a best one iff $p |_{\range(m)}$ is injective (with $p |_{\range(m)}$ the restriction of the program $p$ over the range of $m$).
\end{proposition}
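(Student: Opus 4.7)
By Proposition~\ref{prop:kpost}(\ref{prop:one}), the ordering $m \sqsubseteq n$ is equivalent to $\ker(n) \subseteq \ker(m)$, so $m$ being a best monolithic minimiser amounts to $\ker(m)$ being the largest kernel among all monolithic minimisers for $p$. A first observation is that for any monolithic minimiser $m$, the correctness condition $p \circ m = p$ forces $\ker(m) \subseteq \ker(p)$: if $m(u) = m(v)$ then $p(u) = p(m(u)) = p(m(v)) = p(v)$. My plan is therefore to prove the statement by establishing two equivalences: (a) $m$ is a best monolithic minimiser iff $\ker(m) = \ker(p)$; and (b) $\ker(m) = \ker(p)$ iff $p|_{\range(m)}$ is injective.

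For (b), the key algebraic fact is that idempotency $m \circ m = m$ makes $m$ the identity on its range: if $u = m(u_0)$, then $m(u) = m(m(u_0)) = m(u_0) = u$. Assume first $\ker(m) = \ker(p)$, and let $u, v \in \range(m)$ with $p(u) = p(v)$; then $(u,v) \in \ker(p) = \ker(m)$, so $u = m(u) = m(v) = v$, proving injectivity of $p|_{\range(m)}$. Conversely, assume $p|_{\range(m)}$ is injective; we already have $\ker(m) \subseteq \ker(p)$, and for the reverse inclusion, $p(u) = p(v)$ gives $p(m(u)) = p(u) = p(v) = p(m(v))$ with both $m(u), m(v) \in \range(m)$, so injectivity forces $m(u) = m(v)$, i.e.\ $(u,v) \in \ker(m)$.

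For (a), the backward direction is immediate: if $\ker(m) = \ker(p)$, then any monolithic minimiser $n$ satisfies $\ker(n) \subseteq \ker(p) = \ker(m)$, hence $m \sqsubseteq n$. The forward direction is the step that requires actual construction: I must exhibit a concrete monolithic minimiser $n^{\star}$ with $\ker(n^{\star}) = \ker(p)$, so that bestness of $m$ yields $\ker(p) = \ker(n^{\star}) \subseteq \ker(m)$ and, combined with the first observation above, $\ker(m) = \ker(p)$. To build $n^{\star}$, choose (one application of the axiom of choice) a representative $c(o) \in p^{-1}(\{o\})$ for each $o \in \range(p)$, and set $n^{\star}(u) := c(p(u))$. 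A direct check gives $p(n^{\star}(u)) = p(c(p(u))) = p(u)$, so $p \circ n^{\star} = p$, and $n^{\star}(n^{\star}(u)) = c(p(c(p(u)))) = c(p(u)) = n^{\star}(u)$, so $n^{\star} \circ n^{\star} = n^{\star}$; moreover $\ker(n^{\star}) = \ker(p)$ by construction. The only conceptual obstacle is this witness construction; the rest is routine manipulation of the two defining equations of a monolithic minimiser.
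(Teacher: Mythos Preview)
Your proof is correct. The paper states this proposition without proof, so there is nothing to compare against directly; your argument via the two equivalences (a) and (b) is a clean and complete route. One remark worth making explicit: your construction of $n^{\star}$ with $\ker(n^{\star}) = \ker(p)$ in step (a) also yields, as a byproduct, Theorem~\ref{th:monolithic_minimiser_existence} (existence of a best monolithic minimiser), which the paper likewise states without proof --- so your argument is in fact self-contained and does not rely on that theorem as a black box.
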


Now we can show that using a minimiser $m$ at the client guarantees that the
attacker $\attack{p}{h}$ learns no more about the input than which can
be learned by observing the output of the legitimate use $p$ (the proof follows from the ordering between minimisers and Proposition~\ref{prop:kpost}):
\begin{theorem}\label{thm:k}
  If $m$ is a best monolithic minimiser for $p$ then %
  for any hidden
  use $h$ we have $\attack{p}{h}\circ m \equiv p$.
  
  \begin{proof}
\begin{align*}
    \begin{array}{rll}
    p~\equiv & p \circ m & (\text{$m$ is a minimiser for $p$})
\\ \sqsubseteq & \attack{p\circ m}{ h \circ m}  &(\text{Prop. \ref{prop:kpost}(4)})
      \\ \equiv & \attack{p}{h} \circ m        &(=)
      \\
\\ \attack{p}{h} \circ m \sqsubseteq & m & (\text{$m$ is a
                                 minimiser for $p$,}\\
                                 && \text{Prop. \ref{prop:kpost}(2)})
\\  \sqsubseteq & \attack{p}{m} & (\text{Prop. \ref{prop:kpost}(4)})
\\\sqsubseteq & \attack{p}{p} & (\text{Def. \ref{def:exkmin}})
      \\ \sqsubseteq & p & (\text{Prop. \ref{prop:kpost}(3)})
  \end{array}
\end{align*}
\end{proof}

\end{theorem}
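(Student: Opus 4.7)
The plan is to unfold $\equiv$ into the two inequalities $p \sqsubseteq \langle p, h \rangle \circ m$ and $\langle p, h \rangle \circ m \sqsubseteq p$, and to obtain each by chaining the basic facts of Proposition~\ref{prop:kpost} with the two defining equations of a monolithic minimiser, $p \circ m = p$ and $m \circ m = m$. The best-ness hypothesis on $m$ will be needed in only one of the two directions.

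For $p \sqsubseteq \langle p, h \rangle \circ m$, I would begin from correctness $p = p \circ m$, which in particular gives $p \equiv p \circ m$. Proposition~\ref{prop:kpost}(4) applied to the functions $p \circ m$ and $h \circ m$ yields $p \circ m \sqsubseteq \langle p \circ m, h \circ m \rangle$, and the right-hand side is $\langle p, h \rangle \circ m$ by the pointwise definition of the pairing operator. This direction uses neither best-ness nor idempotency, only correctness of $m$.

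For $\langle p, h \rangle \circ m \sqsubseteq p$, Proposition~\ref{prop:kpost}(2) immediately strips the outer pair and gives $\langle p, h \rangle \circ m \sqsubseteq m$, so it suffices to prove $m \sqsubseteq p$. This is where the best-ness hypothesis must enter. Using the injectivity characterisation stated just before the theorem, $p|_{\range(m)}$ is injective; so if $p(u) = p(v)$ then correctness gives $p(m(u)) = p(u) = p(v) = p(m(v))$ with $m(u), m(v) \in \range(m)$, forcing $m(u) = m(v)$. Hence $\ker(p) \subseteq \ker(m)$, which by Proposition~\ref{prop:kpost}(\ref{prop:one}) is exactly $m \sqsubseteq p$. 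An alternative routing, closer in spirit to the paper's algebraic style, extends the chain as $\langle p, h \rangle \circ m \sqsubseteq m \sqsubseteq \langle p, m \rangle \sqsubseteq \langle p, p \rangle \sqsubseteq p$, using parts (4) and (3) of Proposition~\ref{prop:kpost} together with monotonicity of $\langle p, \cdot \rangle$ in its second argument; but the middle inequality there still collapses to $m \sqsubseteq p$.

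The main obstacle is precisely isolating and justifying this $m \sqsubseteq p$ step, since every other step in the argument is a purely mechanical application of Proposition~\ref{prop:kpost} or of the defining identity $\langle f, g \rangle \circ m = \langle f \circ m, g \circ m \rangle$. Best-ness is used nowhere else, and one could easily slip past its use if one were not careful to notice where the range of $m$ shrinks the kernel of $p$ down to the kernel of $m$.
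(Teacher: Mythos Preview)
Your proposal is correct and follows essentially the same two-direction decomposition as the paper. The first direction ($p \sqsubseteq \langle p,h\rangle\circ m$) is identical. For the second direction the paper takes exactly the ``alternative routing'' you describe, namely the chain $\langle p,h\rangle\circ m \sqsubseteq m \sqsubseteq \langle p,m\rangle \sqsubseteq \langle p,p\rangle \sqsubseteq p$; you are right that the step $\langle p,m\rangle \sqsubseteq \langle p,p\rangle$ is precisely $m \sqsubseteq p$ in disguise, and your explicit justification of $m \sqsubseteq p$ via the injectivity characterisation (Proposition on best minimiser injectivity) makes rigorous what the paper leaves to a bare citation of the disclosure-ordering definition.
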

Finally, we note that:
\begin{theorem}\label{th:monolithic_minimiser_existence}
    For every  program $\textit{p}$ there exists a best monolithic minimiser.
\end{theorem}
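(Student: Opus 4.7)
The plan is to explicitly construct a best monolithic minimiser $m$ for $p$ by picking a canonical representative from each $\ker(p)$-equivalence class, and then show it satisfies the two required conditions plus the minimality order against any competing minimiser.

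First I would observe that if $n$ is any monolithic minimiser for $p$, then the condition $p \circ n = p$ immediately forces $\ker(n) \subseteq \ker(p)$: whenever $n(u) = n(v)$, applying $p$ gives $p(u) = p(v)$. So by Proposition~\ref{prop:kpost}(\ref{prop:one}), every monolithic minimiser satisfies $\ker(n) \subseteq \ker(p)$, and hence the ordering on minimisers is bounded from below (in the $\sqsubseteq$ sense) by any function whose kernel equals $\ker(p)$. The strategy is therefore to build a minimiser $m$ achieving $\ker(m) = \ker(p)$.

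Next I would construct $m$ directly. The relation $\ker(p)$ partitions $\mathcal{I}$ into equivalence classes; for each class $C$, choose (using choice if $\mathcal{I}$ is uncountable) a representative $r_C \in C$, and define $m(u) = r_{[u]}$, where $[u]$ is the $\ker(p)$-class of $u$. Then I would verify, in order: (i) $p \circ m = p$, because $r_{[u]} \in [u]$ implies $p(r_{[u]}) = p(u)$; (ii) $m \circ m = m$, because $r_{[u]} \in [u]$ means $[r_{[u]}] = [u]$, so $m(r_{[u]}) = r_{[u]}$; and (iii) $\ker(m) = \ker(p)$, because $m(u) = m(v)$ iff $r_{[u]} = r_{[v]}$ iff $[u] = [v]$ iff $p(u) = p(v)$.

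Finally I would combine these: for any monolithic minimiser $n$ of $p$ we have $\ker(n) \subseteq \ker(p) = \ker(m)$, so by Proposition~\ref{prop:kpost}(\ref{prop:one}) we get $m \sqsubseteq n$. Together with (i) and (ii), this shows $m$ meets Definition~\ref{def:best_minimiser_mono}. The only real subtlety is the appeal to a choice of representatives, which is unproblematic for any concrete input type but worth a brief remark; everything else is a short unfolding of definitions plus a single application of the characterisation of $\sqsubseteq$ by kernel inclusion.
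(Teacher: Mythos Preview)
Your proposal is correct. The paper does not give a standalone proof of this theorem; it is subsumed by the distributed existence result proved in Appendix~A, and when that proof is specialised to a single input the construction is exactly yours (the relation $K$ collapses to $\ker(p)$ and $m$ picks a representative from each $\ker(p)$-class). Your argument that $m$ is best, via the direct observation $\ker(n)\subseteq\ker(p)=\ker(m)$ together with Proposition~\ref{prop:kpost}(\ref{prop:one}), is slightly more streamlined than the paper's proof-by-contradiction in the general distributed case, but the underlying idea is identical.
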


\subsection{Distributed Case} %
\label{sub:distributed}

As for the monolithic case, we have an ordering between distributed minimisers. 
The notion of a best minimiser also exists in the distributed case and restricts the monolithic case definition as follows:

\begin{definition}
    [Best distributed minimiser]
    We say $\textit{dm}$ is a \emph{best distributed minimiser} for $\textit{dp}$ iff (i) $\textit{dm}$ is a distributed minimiser for $\textit{dp}$ and (ii) $\textit{dm} \sqsubseteq \textit{dn}$ for any $\textit{dn}$ a distributed minimiser for $\textit{dp}$.
    \label{def:best_minimiser_dist}
\end{definition}

The following result shows that there always exists a best distributed minimiser (see proof in Appendix A).
\begin{theorem}
    \label{th:distributed_minimiser_existence}
    For every distributed program $\textit{dp}$ there exists a best distributed minimiser.
\end{theorem}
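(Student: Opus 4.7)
The plan is to construct a best distributed minimiser explicitly via per-coordinate equivalence relations derived from $\textit{dp}$, and then show by an exchange argument that any other distributed minimiser has a kernel included in the one we build.

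\textbf{Construction.} First I would define, for each input position $i \in \{0,\dots,n\}$, a binary relation on $\mathcal{I}_i$ by
\[
u_i \sim_i v_i \;\iff\; \forall \vec w \in \textstyle\prod_{j} \mathcal{I}_j.\; \textit{dp}(\vec w[i \mapsto u_i]) = \textit{dp}(\vec w[i \mapsto v_i]).
\]
Each $\sim_i$ is clearly an equivalence relation. Next I would pick a choice function $\textit{dm}_i : \mathcal{I}_i \to \mathcal{I}_i$ that selects a canonical representative from each $\sim_i$-class, so that $\textit{dm}_i(u_i) \sim_i u_i$, $\textit{dm}_i(u_i) = \textit{dm}_i(v_i)$ whenever $u_i \sim_i v_i$, and consequently $\textit{dm}_i \circ \textit{dm}_i = \textit{dm}_i$. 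Then set $\textit{dm} = \bigotimes_{i=0}^n \textit{dm}_i$.

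\textbf{$\textit{dm}$ is a distributed minimiser.} The form requirement from Definition~\ref{def:pre-processor_dist} holds by construction. For idempotency, $\textit{dm} \circ \textit{dm} = \bigotimes_i (\textit{dm}_i \circ \textit{dm}_i) = \bigotimes_i \textit{dm}_i = \textit{dm}$. For correctness $\textit{dp} \circ \textit{dm} = \textit{dp}$, I would argue by a one-coordinate-at-a-time replacement: starting from $\vec u$ and replacing $u_i$ by $\textit{dm}_i(u_i)$ one index at a time, each replacement preserves the value of $\textit{dp}$ because $\textit{dm}_i(u_i) \sim_i u_i$ and the definition of $\sim_i$ is exactly tailored to make coordinatewise replacement invisible to $\textit{dp}$.

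\textbf{Optimality.} This is the step I expect to be the main obstacle, because it requires extracting per-coordinate information from an arbitrary distributed minimiser. Let $\textit{dn} = \bigotimes_i \textit{dn}_i$ be any distributed minimiser for $\textit{dp}$; by Proposition~\ref{prop:kpost}(1) it suffices to show $\kernel(\textit{dn}) \subseteq \kernel(\textit{dm})$. Suppose $(\vec u, \vec v) \in \kernel(\textit{dn})$, so $\textit{dn}_i(u_i) = \textit{dn}_i(v_i)$ for every $i$. Fix an index $i$ and an arbitrary $\vec w$. Applying $\textit{dn}$ to $\vec w[i \mapsto u_i]$ and to $\vec w[i \mapsto v_i]$ gives the same tuple because the $i$-th components agree after minimisation and the other components are handled identically. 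Using $\textit{dp} \circ \textit{dn} = \textit{dp}$ on both sides we conclude $\textit{dp}(\vec w[i \mapsto u_i]) = \textit{dp}(\vec w[i \mapsto v_i])$. Since $\vec w$ was arbitrary this yields $u_i \sim_i v_i$, hence $\textit{dm}_i(u_i) = \textit{dm}_i(v_i)$. As $i$ was arbitrary, $(\vec u, \vec v) \in \kernel(\textit{dm})$, which gives $\textit{dm} \sqsubseteq \textit{dn}$ and establishes that $\textit{dm}$ is best.

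The only delicate point is that the construction of each $\textit{dm}_i$ uses a choice of representatives; this is unproblematic for the existence statement, and indeed mirrors the analogous construction underlying Theorem~\ref{th:monolithic_minimiser_existence}, with the extra work concentrated in showing that the per-coordinate relations $\sim_i$ genuinely capture the coarsest equivalence obtainable by a product minimiser.
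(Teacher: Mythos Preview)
Your proof is correct. Both you and the paper construct the same minimiser, but the routes differ in presentation and in the optimality argument.

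The paper proceeds abstractly: it establishes that the set of product equivalence relations $\boxtimes_i R_i$ forms a complete sub-lattice of $\mathrm{ER}(\prod_i \mathcal{I}_i)$, defines $K$ as the join of all such product relations contained in $\ker(\textit{dp})$, and obtains the $K_i$ from the sub-lattice property. You instead write down each coordinate relation $\sim_i$ explicitly via a universal quantifier over the other coordinates; this is exactly the same $K_i$, arrived at without the lattice machinery. For optimality, the paper argues by contradiction, invoking the data-based characterisation (essentially Proposition~\ref{prop:best_distr_minimisation}) to produce two values in $\range(m_a)$ that ought to have been merged. Your argument is direct: from $\textit{dn}_i(u_i)=\textit{dn}_i(v_i)$ and $\textit{dp}\circ\textit{dn}=\textit{dp}$ you derive $u_i \sim_i v_i$ for each $i$, hence $\ker(\textit{dn})\subseteq\ker(\textit{dm})$. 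Your approach is more elementary and self-contained; the paper's lattice formulation buys a cleaner structural picture (the best minimiser's kernel is literally the largest product relation below $\ker(\textit{dp})$) and generalises more readily, but at the cost of an auxiliary sub-lattice proposition.
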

Note, however, that there is a price to be paid for having distributed
minimisers: the best distributed minimiser may reveal more
information than the best monolithic minimiser for a given function. 
An example is the Boolean \emph{OR} function where the identity
function is a best distributed minimiser, but in the monolithic case we can minimise further by mapping
$(0,0)$ to itself, and mapping all other inputs to $(1,1)$.

Similarly to what we did in the previous section with Proposition~\ref{prop:proper_distribution}, we give a data-based characterisation of best minimisers as follows.

\begin{proposition}[Data-based best distributed minimisation] \label{prop:best_distr_minimisation}
    If $\textit{dm}$ is a best distributed minimiser for $\textit{dp}$, then 
    for all input positions $i$, for all $v_1$ and $v_2 \in \mathcal{I}_i$ such that $v_1 \neq v_2$, there is some $\vec u \in \range (\textit{dm})$ such that \[
\textit{dp} \left( \vec u[i \mapsto v_1] \right) \neq \textit{dp} \left(  \vec u [i \mapsto v_2] \right).
\]
\end{proposition}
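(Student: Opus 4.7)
The plan is to prove the proposition by contradiction: if the claimed distinguishability failed at some position $i$ for two distinct representatives $v_1, v_2$, we could build a strictly more information-removing distributed minimiser than $\textit{dm}$, contradicting its bestness. (Note that the claim is only meaningful when $v_1, v_2$ lie in $\range(\textit{dm}_i)$; for values already collapsed by $\textit{dm}_i$ the statement would have to be read as applying to their representatives, since otherwise a constant coordinate of $\textit{dp}$ already gives a trivial counterexample.)

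Concretely, first I would assume that there are $v_1 \neq v_2 \in \range(\textit{dm}_i)$ such that for every $\vec u \in \range(\textit{dm})$, $\textit{dp}(\vec u[i \mapsto v_1]) = \textit{dp}(\vec u[i \mapsto v_2])$. I would then define a modified local minimiser by $\textit{dm}'_i(x) = v_1$ whenever $\textit{dm}_i(x) = v_2$, and $\textit{dm}'_i(x) = \textit{dm}_i(x)$ otherwise, leaving $\textit{dm}'_j = \textit{dm}_j$ for $j \neq i$, and set $\textit{dm}' = \bigotimes_j \textit{dm}'_j$. By construction $\textit{dm}'$ has the required product form.

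The technical core is to verify that $\textit{dm}'$ is a monolithic minimiser for $\textit{dp}$. For idempotency it suffices to treat coordinate $i$: since $v_1 \in \range(\textit{dm}_i)$ we have $\textit{dm}_i(v_1) = v_1 \neq v_2$, hence $\textit{dm}'_i(v_1) = v_1$, so the two cases in the definition of $\textit{dm}'_i$ both produce fixed points. For correctness $\textit{dp} \circ \textit{dm}' = \textit{dp}$, I would observe that for any $\vec x$ the vectors $\textit{dm}'(\vec x)$ and $\textit{dm}(\vec x)$ differ at most at position $i$, where the former replaces $v_2$ by $v_1$; since $\textit{dm}(\vec x) \in \range(\textit{dm})$, the standing hypothesis gives $\textit{dp}(\textit{dm}'(\vec x)) = \textit{dp}(\textit{dm}(\vec x)) = \textit{dp}(\vec x)$. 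I expect this correctness step to be the main subtlety, because it is precisely where the global assumption about range vectors is used to justify a local modification.

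Finally I would conclude by showing $\textit{dm}' \sqsubsetneq \textit{dm}$. Picking any preimages $a_1, a_2$ with $\textit{dm}_i(a_k) = v_k$ (which exist since $v_k \in \range(\textit{dm}_i)$) and an arbitrary $\vec w$, the vectors $\vec w[i \mapsto a_1]$ and $\vec w[i \mapsto a_2]$ are equated by $\textit{dm}'$ but separated by $\textit{dm}$, so $\ker(\textit{dm}) \subsetneq \ker(\textit{dm}')$. By Proposition~\ref{prop:kpost}(\ref{prop:one}) this is $\textit{dm}' \sqsubseteq \textit{dm}$ with strict inequality, contradicting Definition~\ref{def:best_minimiser_dist}. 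Hence no such pair $v_1, v_2$ exists, which is the statement of the proposition.
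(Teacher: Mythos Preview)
Your argument is correct, and your opening caveat is on point: as literally stated, the proposition fails whenever $\textit{dm}_i$ is non-injective (take any coordinate that $\textit{dp}$ ignores; the best $\textit{dm}_i$ is constant, yet any $v_1 \neq v_2 \in \mathcal{I}_i$ falsify the conclusion). Restricting to $v_1, v_2 \in \range(\textit{dm}_i)$ is the right reading, and the paper itself adopts exactly that restriction when it uses the contrapositive inside the proof of Theorem~\ref{th:distributed_minimiser_existence} in Appendix~A (``distinct values $u,v \in J_a$'' with $J_a = \range(m_a)$).

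The paper does not give a standalone proof of the proposition. The argument that does appear, embedded in Appendix~A, is essentially the same as yours: assuming two distinct representatives at some position are indistinguishable by $\textit{dp}$ for all choices of the remaining coordinates, one builds a strictly coarser product structure contained in $\ker(\textit{dp})$, contradicting bestness. The only difference is presentational. You work directly with minimisers, redirecting the $v_2$-class to $v_1$ and checking idempotency and $\textit{dp}\circ\textit{dm}' = \textit{dp}$ by hand; the paper works in the lattice of equivalence relations, adjoining the pair $(u,v)$ to form $R_a$ and using the sub-lattice property. Your function-level version is slightly more concrete and makes the correctness step (where the hypothesis on $\range(\textit{dm})$ is consumed) explicit, which is a virtue; the lattice version is a bit slicker once the machinery is in place. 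Substantively they are the same proof.
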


This property will prove useful in the next section to build minimisers.

\section{Building Minimisers} %
\label{sec:procedure_for_data_minimisation}

We describe here a procedure to build data
minimisers. This (semi-) procedure is not complete for the best minimisers 
since obtaining a best minimiser is not computable in general. 
Besides, and more pragmatically, our procedure is built on top of a theorem prover and a symbolic execution engine, and it is thus also limited by the power of such tools in addition to the theoretical algorithmic limits.

Only the main program (and optionally a specification overriding the
implicit specification embodied in the program itself) is needed to generate a minimiser. Data minimisers thus generated are intended to be deployed either monolithically or in a distributed manner with a part residing at each data collection point;
each data source locally executes the local minimising program before forwarding the representative inputs to the data processor (i.e. the program).

Let us consider again the example described in Fig.~\ref{lst:benefits_simple_program} to give an intuition of the way minimisers can be generated.
In order to  generate a data best minimiser for the program we need to reason about  the dependencies between the possible outputs and the inputs considering all possible execution paths.
A representation of the possible paths for this program is shown in Fig.~\ref{fig:benefits_simple_tree} where the digits $1$ and $2$ correspond to the lines of the program, $\mathrm{Sto}$ is the corresponding store, and $\mathrm{PC}$ is the condition to be satisfied at this point (l.3 of the program does not have any effect in this representation).

\begin{figure}[!t]
    \centering
    \includegraphics[width=1\linewidth]{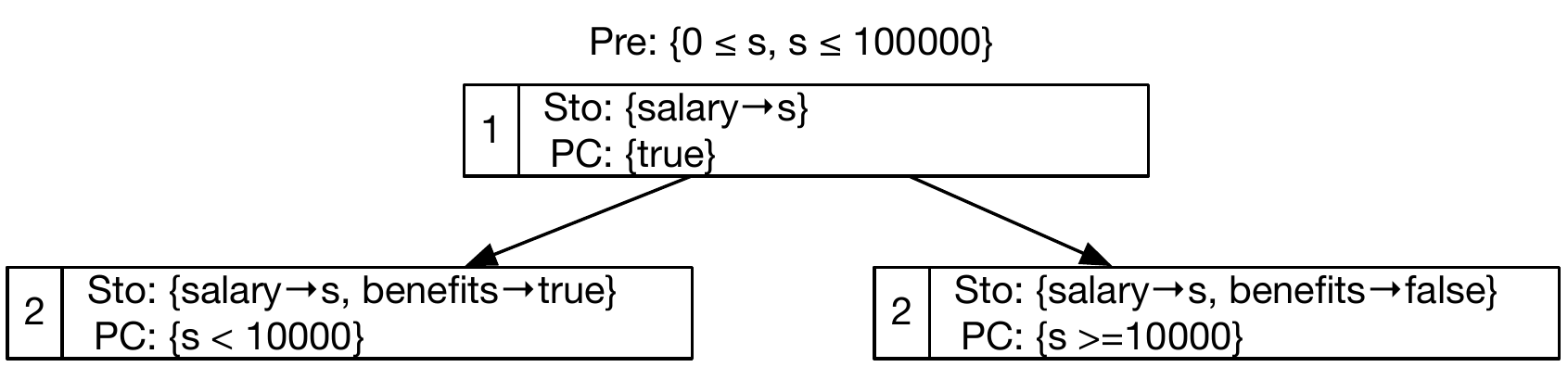}
    \caption{Symbolic execution tree of $P_\textit{re}$ (see Figure~\ref{lst:benefits_simple_program}).}
    \label{fig:benefits_simple_tree}
    \vspace*{-0.5cm}
\end{figure}

The possible execution paths for this program lead to two main outputs, where either $\texttt{benefits == true}$ or $\texttt{benefits == false}$.
Thus the best minimiser generated should distinguish the two cases and return a representative value leading to the same output as if the program were to be executed with the ``real'' value assigned to $\texttt{salary}$.
This is the case for the data minimiser described in Fig.~\ref{lst:benefits_simple_minimiser}.
Moreover, we see that any change of value for $\texttt{repr\_salary}$ leads to a change in the $\texttt{salary}$ computed by the main program.
This means this minimiser is indeed a best minimiser.

\begin{figure*}[!t]
    \centering
    \includegraphics[width=0.75\textwidth]{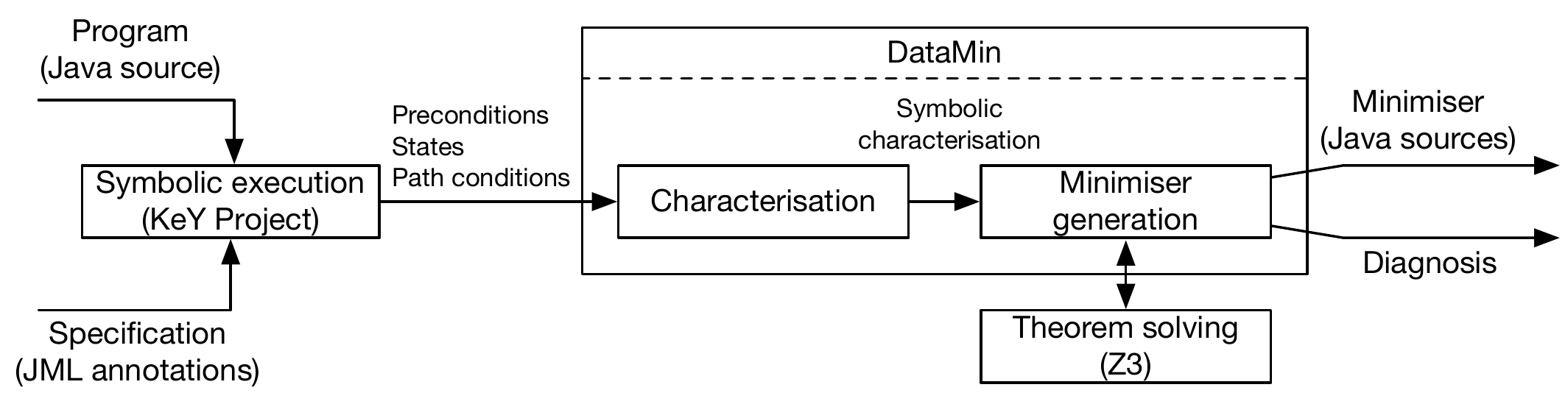}
    \caption{Toolchain of the (semi-)procedure to generate minimisers.}
    \label{fig:toolchain}
    \vspace*{-0.5cm}
\end{figure*}

The procedure to generate data best minimisers follows the toolchain depicted in Fig.~\ref{fig:toolchain}.
Before describing the procedure in detail, we briefly recall concepts related to \emph{symbolic execution}, since we run the program symbolically in order to generate the best minimiser.

\subsection{Symbolic Execution} %
\label{sub:symbolic_execution}

This step corresponds to the two leftmost boxes of the toolchain shown in  Fig.~\ref{fig:toolchain}.
Symbolic execution has been pioneered in~\cite{king1976} as a way to reason about symbolic states reached by a program execution.
A \emph{symbolic execution} of a program is an execution of the program against symbolic inputs.
Expressions are symbolically evaluated and assigned to variables. %
The state of this execution includes the values of the program (modelled as a mapping called {\em store}), and {\em path conditions} (boolean expressions ranging over the symbolic inputs, specifying the conditions to be satisfied for a node to be reached).
Assignments modify the symbolic value of variables, whereas conditions and loops create branches distinguished by their path conditions.
This symbolic execution generates a \emph{symbolic execution tree} where each node is associated to a statement and each arc is associated to a transition of the program.
A state is associated to each node and the root node corresponds to the input of the program.
Each leaf in the tree then corresponds to a completed execution path, and no two leaves have the same path condition.

A key condition for the soundness of symbolic execution is the commutativity of its semantics with respect to the concrete semantics of the language.
This means that executing the program symbolically on given symbolic values and then concretising the output should give the same result as executing the original program over the concretisation of the symbolic inputs. 

The formal definition of this tree makes it possible to reason about the program under analysis.
However, some executions may lead to infinite trees in the presence of loops (when they cannot be simply unfolded).
In this case, in order to get a procedure that terminates and provide conditions for getting a symbolic execution of the program, one should provide \emph{loop invariants} for each concerned loop in the program.
To find the best invariant is a difficult problem and still an active field of research~\cite{flanagan2002}.
Failing to provide the best invariant may result in weaker path conditions failing to capture all of what is theoretically knowable about the states of the symbolic execution.

In our approach the program $P$ is symbolically executed along with the assertions coming from the specification $S$. 
We assume that we have a global \emph{precondition} for the program $P$, given in the specification $S$ of the program, denoted $\mathrm{Pre} \langle P, S \rangle$.
This specification can be expressed explicitly (with annotations in the source code of $P$ for example), or implicitly (by e.g. analyzing the range of the program during the symbolic execution).
In our example, the conditions $0 \leq s$ and $s \leq 100000$ are part of the preconditions as shown in Fig.~\ref{fig:benefits_simple_tree}.
This execution produces a symbolic execution tree equipped with the corresponding path conditions $\mathrm{PC}$ and  $\mathrm{Sto}$ attached to its nodes.

In what follows we define a symbolic characterisation of the program $P$ under specification $S$ capturing the conditions at the termination of the execution.
\begin{definition}    
	[Symbolic characterisation of a program]
	\label{def:symbolic_characterisation}
    We say $\Gamma$ is a \emph{symbolic characterisation} of a program $P$ under specification $S$, written $\Gamma_{\langle P, S \rangle}$, iff $\Gamma$ collects the preconditions, stores, and path conditions to be satisfied for each possible output of $P$:
$\Gamma_{\langle P, S \rangle} \triangleq \mathrm{Pre} \langle P, S \rangle \wedge \left( \bigvee_{l \in \mathrm{Leaves} \langle P, S \rangle} \left( \mathrm{PC}(l) \wedge \mathrm{Sto}(l) \right) \right)$,
where $\mathrm{Leaves}\langle P, S \rangle$ returns the leaves of the symbolic execution tree of $P$ under specification $S$, and $\mathrm{PC}(\cdot)$ and $\mathrm{Sto}(\cdot)$ return the path condition and the state associated to a leaf, respectively.
\end{definition}
\noindent

For the example in Fig.~\ref{lst:benefits_simple_program}, the (simplified) symbolic characterisation is:
$
\Gamma_{\langle P_\mathit{bl}, S_\mathit{bl} \rangle} = 
(0 \leq s \wedge s \leq 100000)\ \wedge\ 
       ((s < 10000 \wedge
                \mathit{salary} = s \wedge
                \mathit{benefits} = \mathit{true})
\lor
       (s \geq 10000 \wedge
                \mathit{salary} = s \wedge 
                \mathit{benefits} = \mathit{false}))
$

The symbolic characterisation induces an equivalence class giving a partition of the state space according to the different possible outputs. This is then used by a solver to get a representative for each equivalence class, which is the next step in our procedure.

\subsection{Static Generation Before Deployment} %
\label{sub:monolithic_data_minimiser_generation}

In this section we show how 
best minimisers %
 can be generated from the symbolic execution tree.
In the following, the monolithic case is seen as a specific case of the distributed case: the former is reduced to a distributed case with only one input.
The construction spans over the two boxes of the toolchain in Fig.~\ref{fig:toolchain} as detailed in the following paragraphs.

The input domain of the semantic function $\textit{dp}$ having $n+1$ inputs is $\prod_{i = 0}^n \mathcal{I}_i$ and the corresponding input variables of $\textit{dp}$ are denoted as $x_i$.
Each domain $\mathcal{I}_i$ is determined implicitly by $\textit{DP}$ or explicitly by its specification $S$.
A local function $\textit{dm}_i$ is generated for each input $x_i \in \mathrm{Input} (\textit{DP})$.
We are still interested in the different possible outputs of $\textit{dp}$ but we cannot directly use its kernel as in the monolithic case since this would require all the minimisers to have access to the data at other data source points.
Instead of this, each data source point assumes that the data to be disclosed by other points can take any value in their domain.
Thus we need to compute, for each input variable $x_i$, the equivalence classes over its domain such that the  value $o$ of the output of the program remains the same for any possible assignment to the other inputs variables $x_j$ for any $j \in (\{0,\dots,n\} \setminus \{i\})$.

 \begin{figure*}[!t]
 \begin{framed}
 \begin{algorithmic}[1]
 \State{$\textit{min} = \{\}$} \Comment initialise minimiser
 \State{$\textrm{decl } \textrm{logical\_variable } i'$} \Comment declare a new logical variable
 \ForAll{$i \in \textit{Inputs}$} \Comment iterate over the inputs
 	\State{$\gamma = \Gamma$} \Comment copy the program symbolic characterisation to $\gamma$
 	\State{$\textit{min}[i] = \{\}$} \Comment initialise minimiser for input $i$
 	\While{$\gamma.\textrm{check()}$} \Comment \strong{call solver} to loop as long as $\gamma$ is satisfiable
 		\State{$\textit{model} = \gamma.\textrm{model()}$} \Comment \strong{call solver} to get a valid assignment for $\gamma$
 		\State{$\textit{formula} = i' == \textit{model}[i] \wedge \Gamma \wedge \Gamma[i'/i]$} \Comment build distributed minimisation formula
 		\State{$\textit{formula} = \exists i'. \forall (\textit{Inputs} \setminus i) \in \textit{Dom}(\textit{Inputs} \setminus i). \exists o. \textit{formula}$} \Comment introduce quantifiers except for $i$
 		\State{$\textit{wp} = \textit{formula}.\textrm{quantif\_elim()}$} \Comment \strong{call solver} to eliminate quantifiers
 		\State{$\textit{min}[i] +\hspace{-1mm}= (\textit{wp}, \textit{model}[i])$} \Comment{add weakest precondition with representative to minimiser}
 		\State{$\gamma = \gamma \wedge \neg \textit{wp}$} \Comment conjunct negation of the weakest precondition to $\gamma$ to limit the loop
 	\EndWhile
 \EndFor
 \end{algorithmic}
 \end{framed}
 \caption{Distributed data minimiser generation (excerpt).}
 \label{alg:data_minimiser_generation_distributed}
     \vspace*{-0.5cm}
 \end{figure*}
 
The algorithm of Fig.~\ref{alg:data_minimiser_generation_distributed} shows how the distribution of the inputs is taken into account. 
Here, $\textit{min}$ (l.1) stands for the minimiser being built (as a map containing the local minimiser $\textit{min}[i]$ (l.5) for each input $i$ (l.3)), $\textit{Inputs}$ (l.3) denotes the inputs of the program, and $\Gamma$ (l.4) denotes the formula of the symbolic characterisation of the program.
The notation $\phi [y/x]$ denotes the formula $\phi$ in which $y$ replaces $x$ and $==$ the logical equality (l.8) while $S \setminus s$ denotes $S \setminus \{s\}$ (the set $S$ from which the element $s$ has been substracted, l.9).

Primitives depending on a theorem prover are called at three locations in the algorithm. The first one, $\textrm{check}()$ (l.6) checks whether or not a logical formula is satisfiable. %
Then, $\textrm{model}()$ (l.7) applied to a formula returns a satisfying valuation. These two primitives are linked as it is only possible to find such a valuation for satisfiable formula.
Finally, $\textrm{quantif\_elim}()$ (l.10) is a procedure to eliminate (both universal and existential) quantifiers in a formula, thus simplifying a formula by removing the corresponding symbols.

After having initialised the data structure $\textit{min}$ holding the minimisers as a map from inputs $i$ to tuples $(\textit{weakest precondition}, \textit{representative})$ (l.1), a new logical variable $i'$ is declared (l.2).
At this point, all the inputs (from $\textit{Inputs}$ (l.3)) and the output $o$ of the program already exist as logical variables (implicitly introduced with $\Gamma$ which is an input of this algorithm).
The rationale to introduce a new logical variable $i'$ is because it is used to control variations on one input during the procedure (l.8-9).
Then, the algorithm iterates over all the inputs (l.3).
The symbolic characterisation $\Gamma$ is assigned to the variable $\gamma$ (l.4) (which will be reinitialised at each iteration of the loop of l.3).
The original $\Gamma$ will be used to build formulas to be solved while the fresher $\gamma$ will be used to control the loop of l.6.
The minimiser $\textit{min}$ is then initialised for the current input $i$ (l.5).
The algorithm should then loop over all equivalence classes for the current input $i$.
This is ensured by (i) representing each equivalence class by its weakest precondition $\textit{wp}$ (l.7-11), (ii) conjuncting the negation of the weakest precondition $\textit{wp}$ found to the symbolic characterisation $\gamma$ (l.11), and (iii) looping as long as there is another equivalence class (ie, weakest precondition) to be covered by checking the satisfiability of $\gamma$ conjuncted with the negation of all previous conditions (l.6).

We now explain in more detail how the weakest preconditions are found (l.7-11).
A satisfying valuation of the characterisation $\gamma$ is requested from the solver and assigned to the variable $\textit{model}$ (l.7).
The valuation of variable $x$ can be called by $\textit{model}[x]$.
A formula (assigned to the variable $\textit{formula}$) is then built in two steps.
First, we conjunct $\Gamma$, $\Gamma[i'/i]$, and the formula $i' == \textit{model}[i]$ (l.8).
This fixes $\Gamma[i'/i]$ to the case where $i'$ is equal to the satisfying valuation found previously (l.7).
Then we introduces quantifiers for all the free logical variables except for $i$ to get the data-based characterisation of data minimisation as given by Proposition~\ref{prop:proper_distribution} (l.9).
The output $o$ is existentially quantified as it can freely vary without any other constraint than those imposed by the $\Gamma$s.
All the inputs except for $i$ and $i'$ are universally quantified because the characterisation states that the preservation of the functional correctness (implied by the value of the output $o$ here) has to be guaranteed for all possible variations of inputs (except for inputs $i$ and $i'$).
Finally, the input $i'$ is existentially quantified only for the purpose of being eliminated later (free variables are existentially quantified for SAT solvers).
Once the formula has been built, the quantifiers are eliminated by calling the suitable procedure from a solver (l.10).
This gives the weakest precondition corresponding to an equivalence class of the inputs (the one corresponding to the value of $\textit{model}[i]$).
This equivalence class $\textit{wp}$ is then added to the minimiser $\textit{min}[i]$ along with its representative $\textit{model}[i]$ (l.11) before being excluded from the control $\gamma$ (l.12) before a new iteration, until exhaustion of equivalence classes (l.6).

\begin{theorem}
    [Soundness]
    The algorithm of Fig.~\ref{alg:data_minimiser_generation_distributed} builds a best distributed minimiser $\textit{dm}$ for program $\textit{dp}$.
\end{theorem}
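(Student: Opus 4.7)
The plan is to show two things: first, that the output $\textit{dm}=\bigotimes_i \textit{dm}_i$ is a distributed minimiser (correctness and idempotence), and second, that it is a \emph{best} one in the sense of Definition 10. The decomposition as $\bigotimes_i \textit{dm}_i$ is immediate from the code: the outer loop (line 3) builds a separate table $\textit{min}[i]$ for each input position, and each $\textit{dm}_i$ is then reconstructed from its table as ``pick the unique $\textit{wp}$ satisfied by $v$ and return its stored representative.'' So the main task reduces to understanding what each $\textit{wp}$ encodes.

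First I would analyse the key formula built on lines 8--10. Ignoring quantification, $\Gamma \wedge \Gamma[i'/i]\wedge (i'=\textit{model}[i])$ is satisfiable precisely when there exist input vectors $\vec w,\vec w'$ that agree on every coordinate except position $i$, where $w'_i=\textit{model}[i]$, and such that $\textit{dp}(\vec w)=\textit{dp}(\vec w')$ (captured by the shared output $o$ before it is existentially quantified). After the quantifier pattern on line 9 (universally over every coordinate except $i$ and $i'$, existentially over $o$ and $i'$), the resulting free formula in $i$ characterises exactly the set
\[
W_{\textit{model}[i]}=\{v\in\mathcal{I}_i \mid \forall\,\vec w,\ \textit{dp}(\vec w[i\mapsto v])=\textit{dp}(\vec w[i\mapsto \textit{model}[i]])\}.
\]
Quantifier elimination (line 10) returns precisely $\textit{wp}$ as a quantifier-free predicate equivalent to membership in $W_{\textit{model}[i]}$. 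This is the central step and, together with the correctness of the symbolic characterisation $\Gamma$ with respect to $\textit{dp}$, is the main technical obstacle: it requires the symbolic execution to be sound and complete for the termination behaviour of the program (equivalently, for any loops to come with strong enough invariants), and it requires $\textrm{quantif\_elim}()$ to succeed on the theory at hand.

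Next I would verify the loop invariants of lines 3--13. At each iteration $\gamma$ equals $\Gamma$ restricted to the values of $i$ not yet covered, so each new $\textit{model}[i]$ chosen on line 7 lies outside all previously found $W$-sets; hence the $W$-sets enumerated are pairwise disjoint equivalence classes of the relation ``$v_1\sim_i v_2$ iff $\forall\,\vec w,\ \textit{dp}(\vec w[i\mapsto v_1])=\textit{dp}(\vec w[i\mapsto v_2])$.'' By a standard argument the union of these classes exhausts $\mathcal{I}_i$ (termination requires the assumption that this partition is finite, inherited from the finiteness of the symbolic execution tree and of the range of $\textit{dp}$). Because the stored representative $\textit{model}[i]$ lies in its own class, $\textit{dm}_i\circ \textit{dm}_i=\textit{dm}_i$, giving idempotence; and $\sim_i$ is exactly the relation on coordinate $i$ demanded by the Kernel-based characterisation of distributed minimisers (Proposition \ref{prop:proper_distribution}), so $\textit{dp}\circ \textit{dm}=\textit{dp}$ follows. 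Hence $\textit{dm}$ is a distributed minimiser.

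Finally, for optimality I would invoke Proposition \ref{prop:best_distr_minimisation}. By the construction of $\sim_i$, any two distinct representatives chosen by the algorithm witness an $i$ and a vector $\vec u\in\range(\textit{dm})$ with $\textit{dp}(\vec u[i\mapsto v_1])\neq \textit{dp}(\vec u[i\mapsto v_2])$, so no distributed minimiser can coalesce them. Equivalently, any distributed minimiser $\textit{dn}$ for $\textit{dp}$ must refine $\sim_i$ on each coordinate, whence $\ker(\textit{dn})\subseteq\ker(\textit{dm})$, i.e.\ $\textit{dm}\sqsubseteq \textit{dn}$ by Proposition \ref{prop:kpost}(1). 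The hardest part, as noted, is the first step: proving that the quantifier-eliminated predicate $\textit{wp}$ faithfully captures $W_{\textit{model}[i]}$; the remaining reasoning is essentially bookkeeping around this semantic equivalence together with the two propositions already established.
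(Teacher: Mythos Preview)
Your proposal is correct and follows the same route the paper merely gestures at: the paper's own justification is a two-line sketch pointing to Proposition~\ref{prop:best_distr_minimisation}, to the fixity of representatives, and to the existence construction in Appendix~A, and you have essentially unpacked that sketch by showing that the quantifier-eliminated $\textit{wp}$ computes exactly the coordinatewise equivalence $\sim_i$ and that this is the coarsest product relation contained in $\ker(\textit{dp})$ (which is precisely the object Appendix~A builds). One small caveat: Proposition~\ref{prop:best_distr_minimisation} as stated is only a necessary condition for being best, so citing it for the optimality direction is not quite right; however, your subsequent direct argument --- that any distributed minimiser $\textit{dn}$ must satisfy $\ker(\textit{dn}_i)\subseteq{\sim_i}=\ker(\textit{dm}_i)$ and hence $\textit{dm}\sqsubseteq\textit{dn}$ via Proposition~\ref{prop:kpost}(\ref{prop:one}) --- is the correct step and does not rely on that proposition.
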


The soundness of this algorithm relies on Proposition~\ref{prop:best_distr_minimisation}, on the fact that the representative assigned to each equivalence class is fixed, and on the proof of existence in Appendix~A.

\begin{theorem}
    [Termination]
    The algorithm of Fig.~\ref{alg:data_minimiser_generation_distributed} terminates when the number of inputs, the number of equivalence classes on these inputs, and the calls to the external solver terminate.
\end{theorem}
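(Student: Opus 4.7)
The plan is to decompose the algorithm into its three nested sources of iteration and argue termination of each, assuming the three hypotheses in the theorem statement (finiteness of inputs, finiteness of equivalence classes, and termination of the external solver primitives).

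First I would observe that the outer \textbf{for} loop on line 3 iterates over $\textit{Inputs}$ exactly once per input variable; by the first hypothesis this set is finite, so the loop body is entered a bounded number of times. Each iteration performs only a constant number of local assignments before entering the \textbf{while} loop on line 6, whose termination is the main case to analyse.

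For the \textbf{while} loop, the key observation is a monovariant argument: at line 12, $\gamma$ is updated to $\gamma \wedge \neg \textit{wp}$, where $\textit{wp}$ is the weakest precondition just computed for the equivalence class containing the satisfying valuation $\textit{model}[i]$. I would argue that $\textit{wp}$ is satisfiable (it is produced from a model of $\gamma$), so after the update $\gamma$ strictly excludes at least one input valuation that was previously included, and moreover it excludes the entire equivalence class containing $\textit{model}[i]$. Therefore, the set of equivalence classes over $\mathcal{I}_i$ that are still consistent with $\gamma$ strictly decreases at every iteration. By the second hypothesis this set is finite, so after finitely many iterations no equivalence class remains, $\gamma$ becomes unsatisfiable, the call to $\textrm{check}()$ on line 6 returns false, and the loop exits.

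Finally, the body of the loop contains only three calls out to the solver ($\textrm{check}()$, $\textrm{model}()$, and $\textrm{quantif\_elim}()$) together with purely symbolic manipulations (substitution, conjunction, quantifier insertion, map update). By the third hypothesis each solver call terminates, and the symbolic manipulations are syntactic operations of bounded cost per iteration. Combining the three levels of termination yields termination of the whole algorithm. The main obstacle, and the only nontrivial step, is the monovariant for the \textbf{while} loop; this rests on the soundness of $\textrm{quantif\_elim}()$ (so that $\textit{wp}$ really characterises the equivalence class of $\textit{model}[i]$ from Proposition~\ref{prop:proper_distribution}) and on the fact that distinct equivalence classes of $\ker(\textit{dm}_i)$ have disjoint weakest preconditions, so that removing one class does not accidentally exclude points of other classes that still need to be processed.
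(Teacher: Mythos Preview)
Your proposal is correct and follows essentially the same approach as the paper: termination of the outer \textbf{for} loop by finiteness of inputs, termination of the inner \textbf{while} loop via a strictly decreasing measure (the equivalence classes still consistent with $\gamma$), and termination of each iteration by the assumption on the solver primitives. Your argument is in fact more explicit than the paper's, which merely states that ``the condition of (l.6) strictly decreases in terms of the number of times it is satisfiable'' without spelling out, as you do, that this rests on $\textit{wp}$ soundly characterising a full equivalence class and on distinct classes having disjoint weakest preconditions.
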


This is proven by showing that all the loops in the algorithm iterate over sets built from inputs and equivalence classes.
The set of inputs is not modified in the loops while the condition of (l.6) strictly decreases in terms of the number of times it is satisfiable.
However, it terminates only if there is a finite number of equivalence classes for the current input.
Since we depend on external procedures, the termination condition also rely on these ones satisfying this condition.
In practice, this is only a semi-procedure.

In both the monolithic and the distributed cases, generating a minimiser may give interesting insights about the process.
It can be used to get metrics, such as the number of equivalence classes finally obtained, to compare with the initial size of the input domain. %
It could also be used 
for quantitative (information-theoretic) analysis.

\subsection{Other Mechanisms}
\label{sub:other_mechanisms}

The mechanism presented in the previous section is the most expensive to deploy.
Indeed, the static approach requires computation of all the (symbolically) possible cases.
Thus, each of the calls to the external solver in the algorithm depicted Fig.~\ref{alg:data_minimiser_generation_distributed} occurs $\sum_{i\in \textit{Inputs}} \#\textit{equivClasses}_i$ times with $\#\textit{equivClasses}_j$ being the number of equivalence classes for the input $i$ (there is in fact one more call for the satisfiability check (l.6)).
This cost is hard to predict in advance.
Indeed, programs which can be highly minimised and end up with few equivalence classes can offer good performance, but quasi-injective programs may have a large number of equivalence classes.

This complexity justifies the need of other mechanisms such as the ones presented in Table~\ref{tab:enforcing_data_minimisation}.
All these other mechanisms allow the reduction of the complexity by limiting the domain of verification.
To effectively use data minimisation, it is thus necessary in general to choose easier mechanisms.
For instance, the static and dynamic mechanisms run online after deployment leads to only $\sum_{i\in \textit{Inputs}} 1 = |\textit{Inputs}|$ calls to the external procedures since the input is known at run-time.
However, this still needs to call semi-decidable external procedures, which may create a large overhead.
If weaker guarantees are acceptable, then it is possible to use test strategies which verify only partial domains.

\section{Implementation and Examples}
\label{sec:implementation_and_examples}

In this section we present \emph{DataMin},
an implementation of the procedure to generate a minimiser.
Two examples are then given to illustrate the notions of data minimality and data minimisation.
The first example contains a loop and lies in the monolithic case while the second example focuses on a distributed setting to illustrate the distributed minimisation.

\subsection{\emph{DataMin} Implementation} %
\label{sub:datamin_implementation}

The (semi-)procedure described in the previous section has been implemented in the Python language as a proof of concept named \emph{DataMin}.%
We rely on the Symbolic Execution Engine of the \emph{KeY} Project~\cite{hentschel2014}.
This symbolic executor is run against a program $P$ written in Java source code for which a minimiser should be generated.

The program can be instrumented with Java Modeling Language (JML) annotations~\cite{leavens1998} to help \emph{KeY} with the exploration of the states space.
They can be used to give a specification $S$ of the domain of inputs under normal behavior for the program according to the design-by-contract terminology~\cite{mitchell2001} (written $\texttt{//@ requires P}$ with $\texttt{P}$ a property assumed to be verified).
The annotations can also be helpful when dealing with loops that cannot be unfolded.
In this case, a loop invariant can be provided (written $\texttt{//@ maintaining P}$, with $\texttt{P}$ the loop invariant) along with a description of a decreasing term to ensure termination (written $\texttt{//@ decreasing t}$, with $\texttt{t}$ the decreasing term).

The symbolic execution results in the definition of stores and path conditions for each leaf of the symbolic execution tree along with the preconditions when applicable.
\emph{DataMin} generates the symbolic characterisation $\Gamma_{\langle P, S \rangle}$ and builds the partitioning $k_{\langle P, S \rangle}$ and the sectioning $r_{\langle P, S \rangle}$ functions.
The theorem prover \emph{Z3}~\cite{demoura2008} is called through its API to solve constrains as needed during these steps.
This proof-of-concept supports only a limited range of data structures but could be extended thanks to the many theories on which \emph{Z3} relies.%

Finally, \emph{DataMin} generates the minimiser as a set of Java source code files to be run by the data source points before disclosing the data.
These files are directly compilable and ready to be exported as Java libraries to ease their use.
This whole process currently runs in reasonable time for these examples (less than a second, given that Python is run in its interpreted mode). 
For the first example, the solver used (\emph{Z3}) is not able to totally perform quantification elimination.
We thus checked the results hold by manually verifying them using the \emph{Redlog} system~\cite{dolzmann1997}.
This is a limitation of the external tools and not of the procedure proposed in this paper.
\subsection{Loyalty Status} %
\label{sub:loyalty_status}

The first example proposes a situation where an airport facility must deliver different services to different customers depending on their status.
This status level is determined depending on the number of flights made by the customer in the previous year with its favorite company, \emph{PrivaFly}.
This number of flights is disclosed by the company to the airport.

However, \emph{PrivaFly} wants to adopt the best practices in personal data protection and requires that only the needed data should be disclosed.
The airport services have their own policy to compute the status level, as shown in Fig.~\ref{lst:loyaltyapp_program}.
The method $\texttt{compStatusLevel}$ takes a number of $\texttt{flights}$ and returns a $\texttt{status}$ level.
If the number of flights is lower than $9$, the status is $0$. From $10$ to $19$ flights, the status level is equal to the number of flights from which $10$ is substracted.
For a number of flights ranging from $20$ to $29$, the computation involves a loop and will be detailed later.
Finally, over $30$ flights, the status level is capped to $500$.
The first JML annotation (l.2) specifies a range for the input domain.
Then, the following annotations from (l.9-13) will be described later.

\begin{figure}[t!]
    \centering
    \lstinputlisting[style=my_style,columns=flexible]{LoyaltyApp/LoyaltyApp.java}
    \caption{Program $P_\textit{ls}$ to compute a loyalty status.}
    \label{lst:loyaltyapp_program}
    \vspace*{-0.5cm}
\end{figure}

Intuitively, there is no need to give a precise value for a number of flights between $0$ and $9$ and over $30$.
On the other hand, the exact number should be disclosed between $10$ and $19$.
The program $M_{\textit{ls}}$ from Fig.~\ref{lst:loyaltyapp_minimiser} is the result of running \emph{DataMin} on $P_\textit{ls}$ (see Fig.~\ref{lst:loyaltyapp_program}).
In this program, the complex (weakest pre)conditions for each equivalence class returned by \emph{DataMin} have been replaced by placeholders $\texttt{wp}i$ (with $i$ a number).
Each of these conditions depends only on the input of the minimiser (here, on the variable $\texttt{flights}$) in order to be executable.

\begin{figure}[t!]
    \centering
    \lstinputlisting[style=my_style,columns=flexible]{LoyaltyApp/LoyaltyAppMin_flights_postprocessed.java}
    \caption{Minimiser $M_{\textit{ls}}$ for $P_\textit{ls}$ (see Fig.~\ref{lst:loyaltyapp_program}).}
    \label{lst:loyaltyapp_minimiser}
    \vspace*{-0.5cm}
\end{figure}

We can see that the intuition above is partially confirmed by the minimiser generated.
The representatives chosen by the minimiser for $\texttt{repr\_flights}$ are the minimum of the corresponding equivalence classes.
As a consequence it has been fixed to $0$ when the actual number of flights ranges from $0$ to $10$ (l.4 of the minimiser).
We observe $10$ is included (case not captured by the following conditionals of the minimiser) despite the fact it is not captured by the first conditional of the program (l.5) but by the second one (l.6).
However, since the status level is also equal to $0$ in this case, it belongs to the same equivalence class as the number of flights captured by the first conditional.
For a number of flights between $11$ and $19$, the output is different each time  (number of flights minus $10$).
Thus the representative is the number of flights itself (l.5-13 of the minimiser).
The case of a number of flights greater than $30$ is trivially captured by a representative fixed to $30$ as well (l.20).

Between $20$ and $29$ flights, the status is computed thanks to a loop (l.14-16 of the program).
The code makes the status level computed after this point to be equal to the number of flights multiplied by the same number minus $19$.
The symbolic executor of \emph{KeY} is not able to solve this loop itself and an invariant ($\texttt{maintaining}$) has been provided (l.9-11).
These relations described by the loop invariant are true both at the beginning and at the end of each iteration.
The following annotation ($\texttt{decreasing}$, l.12) helps the executor with the termination of the loop by providing a decreasing term. 
Finally, the last annotation ($\texttt{assignable}$, l.13) also helps the symbolic executor to handle the loop (by declaring here that no side effect occurs).
These annotations give to the symbolic executor the best invariant.
As a consequence, the data minimisation process is able to propose a minimiser.
The representatives for a number of flights from $20$ to $24$ correspond to the number of flights itself (l.14-18 of the minimiser).
This is not the case starting from $25$ until $29$ (l.19 of the minimiser) because the status is capped at $150$ when the number of flights is less than $30$ and these five possibilities constitute their own equivalence class.

\begin{table*}[t!]
\centering
\begin{threeparttable}
\caption{Equivalence classes for $P_\textit{ls}$ (see Fig.~\ref{lst:loyaltyapp_program}).}
\label{tab:loyalty}
\begin{tabular}{|c|c|c|c|c|c|c|c|}
\hline
\multirow{2}{*}{\textit{invariant}} & \textit{variables} $\backslash$ \textit{values} & \multicolumn{1}{c|}{$[0, 9]$}         & \multicolumn{1}{c|}{$10$} & \multicolumn{1}{c|}{$[11, 19]$} & \multicolumn{1}{c|}{$[20, 24]$} & \multicolumn{1}{c|}{$[25, 29]$} & \multicolumn{1}{c|}{$[30, 100]$} \\ \cline{2-8} 
                                    & $\texttt{flights}$       & 1\textsuperscript{st} $\texttt{if}$ & \multicolumn{2}{c|}{2\textsuperscript{nd} $\texttt{if}$}  & \multicolumn{2}{c|}{3\textsuperscript{rd} $\texttt{if}$}        & final $\texttt{else}$            \\ \hline
best                                & $\texttt{repr\_flights}$ & \multicolumn{2}{c|}{1 class}                                      & 9 classes                       & 5 classes                       & 1 class                         & 1 class                          \\ \hline
weak                                & $\texttt{repr\_flights}$ & \multicolumn{2}{c|}{1 class}                                      & 9 classes                       & 5 classes                       & 5 classes                       & 1 class                          \\ \hline
\end{tabular}
\begin{tablenotes}
    \item Note: Correspondance between $\texttt{flights}$ and their representatives, mismatch between syntactic structure (\texttt{if} conditionals) and semantic classes, and effects of weak invariants on the minimisation.
\end{tablenotes}
\end{threeparttable}
\end{table*}

If the invariant provided for the loop had not been the best invariant, the equivalence class composed of a number of flights between $25$ and $29$ may not have been detected.
The representatives for these cases would have been the actual number, leading to more disclosure than needed.

The observations made during this example are summed up in Table~\ref{tab:loyalty}.
The table shows the mismatch between the syntactic structure of a program (modelled in this table by the correspondence between the number of flights and the $\texttt{if}$ conditionals which are triggered) and the semantic equivalence classes.
The effect of strengthening an invariant is clear, and it shows the impact it may have on the resulting pre-processor.

We have shown here an example of a monolithic minimiser.
The following example addresses the distributed case, when data comes from different sources.

\subsection{Credit Score} %
\label{sub:credit_score}

This second example takes place from the point of view of \emph{PrivaLoan}, a company specialised in making loans to individual customers. 
To decide whether or not loan is granted, \emph{PrivaLoan} computes a credit score.
This score depends on two elements: the credit history and the tax category of the individual.
These two inputs are disclosed by two different parties: a central bank for the credit history and the tax office for the tax category.

One of the competitive advantage of \emph{PrivaLoan} on the credit market is that it protects the privacy of the individuals asking for credit.
This is achieved by applying the limited collection principle to personal data requested to get an offer.
The rules to decide on the credit risk of an individual are implemented in the $\texttt{compCreditScore}$ method of the program from Fig.~\ref{lst:creditapp_program}.
This method takes two arguments: $\texttt{incidents}$, which ranges from $0$ to $3$ as a grade for the credit-history of an individual, and $\texttt{tax}$, which ranges from $1$ to $3$ to indicate the tax category to which the individual belongs.
For an incident level of $2$ or $3$, the credit score is $0$, which means \emph{PrivaLoan} does not want to make an offer.
If the incident level is $0$ or $1$, the credit score is $1$, except when the incident level is $0$ and the tax category is $3$, which leads to a credit score of $2$.

\begin{figure}[t!]
    \centering
    \lstinputlisting[style=my_style,columns=flexible]{CreditApp/CreditApp.java}
    \caption{Program $P_\textit{cs}$ to compute a credit score.}
    \label{lst:creditapp_program}
    \vspace*{-0.5cm}
\end{figure}

As  can be seen in Table~\ref{tab:credit}, which maps the value of $\texttt{creditScore}$ as function of $\texttt{incidents}$ and $\texttt{tax}$, some combinations of inputs return the same score.
Intuitively, there is no need for the central bank to make a distinction between incidents of grade $2$ and $3$ since the output is the same.
Similarly, the tax office should not make a distinction between tax categories of $1$ and $2$.
Indeed, for all the possible values of incidents, the credit score does not change between a tax category of $1$ and one of $2$ (though this score does not remain the same for all values of incidents).

\begin{table}[t!]
\centering
\begin{threeparttable}
\caption{Semantics of $P_\textit{cs}$ (see Fig.~\ref{lst:creditapp_program}).}
\label{tab:credit}
\begin{tabular}{|l|c|cl|c|c|}
\hline
\multicolumn{2}{|l|}{\multirow{1}{*}{\emph{variables}}} & \multicolumn{4}{c|}{$\texttt{incidents}$}                                                       \\ \cline{2-6} 
\multicolumn{1}{|c|}{}  &     \emph{values}          & \multicolumn{1}{c|}{~~~$0$~~~~} & \multicolumn{1}{c|}{~~~$1$~~~~} & ~~~$2$~~~~                 & ~~~$3$~~~~                 \\ \hline
\multirow{3}{*}{~$\texttt{tax}$~~}       & $1$      & \multicolumn{2}{c|}{\multirow{2}{*}{$1$}}            & \multicolumn{2}{c|}{\multirow{3}{*}{$0$}} \\ \cline{2-2}
                                      & $2$      & \multicolumn{2}{c|}{}                                & \multicolumn{2}{c|}{}                     \\ \cline{2-3}
                                      & $3$      & \multicolumn{1}{c|}{$2$} &                          & \multicolumn{2}{c|}{}                     \\ \hline 
\end{tabular}
\begin{tablenotes}
    \item Note: There are three equivalence classes corresponding to the output values $0$, $1$, and $2$.
\end{tablenotes}
\end{threeparttable}
\end{table}

The minimisers $M_{\textit{cs},0}$ and $M_{\textit{cs},1}$ in Fig.~\ref{lst:creditapp_minimiser_incidents} and~\ref{lst:creditapp_minimiser_tax} are output by \emph{DataMin} based on $P_\mathit{cs}$ from Fig.~\ref{lst:creditapp_program}.
Both minimisers reflect the behavior described previously.
As previously, we replaces the (weakest pre-)conditions by placeholders.
Indeed, the representative assigned for the equivalence class $\{2, 3\}$ corresponds to the minimum value of this set (l.5 of the minimiser for the variable $\texttt{incidents}$, see Fig.~\ref{lst:creditapp_minimiser_incidents}).
In all these cases, the value of the output of the program $P_\mathit{cs}$ execution remains the same whichever the representative value chosen.

\begin{figure}[t!] 
    \centering
    \lstinputlisting[style=my_style,columns=flexible]{CreditApp/CreditAppMin_incidents_postprocessed.java}
    \caption{First input ($\textit{incidents}$) minimiser $M_{\textit{cs},0}$ for $P_\textit{cs}$ (see Fig.~\ref{lst:creditapp_program}).}
    \label{lst:creditapp_minimiser_incidents}
\end{figure}

Similarly for the minimiser for the variable $\texttt{tax}$ (see Fig.~\ref{lst:creditapp_minimiser_tax}), the representative computed for the equivalence class $\{ 1, 2 \}$ (l.4) can have any of these values without changing the value of the final output ($1$ has been fixed here).

\begin{figure}[t!]
    \centering
    \lstinputlisting[style=my_style,columns=flexible]{CreditApp/CreditAppMin_tax_postprocessed.java}
    \caption{Second input ($\textit{tax}$) minimiser $M_{\textit{cs},1}$ for $P_\textit{cs}$ (see Fig.~\ref{lst:creditapp_program}).}
    \label{lst:creditapp_minimiser_tax}
    \vspace*{-0.5cm}
\end{figure}

In all the other cases, the representative must have the same value as the input variable (l.4 and l.6 of $M_{\textit{cs},0}$ and l.5 of $M_{\textit{cs},1}$).
Indeed, without knowing a priori the value that will be disclosed by the other data source point, each has to assume it could be any value.

\section{Related Work} %
\label{sec:related_works}

Formal and rigorous approaches to privacy have long been advocated~\cite{tschantz2009}. Many different dimensions in this field have been explored but the data minimisation principle seems not to have been precisely defined, as witnessed by~\cite{gurses2015}, where it is stated that ``a whole family of design principles are lumped under the term `data minimisation'''. 
A related work is the notion of \emph{minimal exposure}~\cite{anciaux2012}. This consists in performing a preprocessing of information on the client's side to give only the data needed to benefit from a service. For instance, if the service to which an individual requests can be modeled as $a \vee b$, then if $a$ is true it will be the only data disclosed: they reduce the \emph{number} of inputs provided, but do not reason on the domain of the inputs.

The semantics of minimality and data minimisation is closely related to information flow, and we have used several ideas from that area \cite{cohen1977,Landauer:Redmond:CSFW93,sabelfeld2001,giacobazzi2004}. 
A semantic notion of dependency has been introduced in \cite{mastroeni2008}, where it is used to characterise the subset of variables in an expression containing all and only the variables that are semantically relevant for the evaluation of the expression. Our notion is related to this (viewing the variables of an expression as the input, and the value of the expression at the output), but much more fine-grained.

A notable difference in the formalisation of minimality compared to usual definitions of information flow is that minimality is a necessary and sufficient condition on information flow, whereas most security formulations of information flow are sufficient conditions only (e.g. noninterference: public inputs are sufficient to compute public outputs). An exception here is Chong's \emph{necessary information release} \cite{chong2010} which provides both upper and lower bounds on information flows. For this reason many static analysis techniques that have been used for security properties (e.g. type systems) are not easy to use for minimality -- an over-approximation to the information flows is not sound for minimality.

The necessary and sufficient conditions embodied in minimisers appeasr to be closely related to the notion of \emph{completeness} in abstract interpretation \cite{Giacobazzi:Ranzato:Completeness}, where a minimiser $m$ plays the role of a \emph{complete abstraction}. 
It is to be noted that some work in the domain of quantitative information flow aiming at automated discovery of leaks also rely on analysis of equivalence classes \cite{Lowe:CSFW02,backes2009}. Compared to our proof-of-concept tool, the approach of \cite{backes2009} contains several ideas that could be beneficial to adopt in our tool.

The idea of equivalence partitioning by using symbolic execution has been used previously in the context of test case generation in \cite{richardson1981}. Testing is also one of the uses of the symbolic execution as performed by the \emph{KeY} theorem prover \cite{KeY2005}. We use the \emph{KeY} tool in our work to symbolically execute the given program, also exploiting the possibility of adding preconditions and invariants.

\section{Conclusion} %
\label{sec:conclusion}

We have provided a formal definition of {\it data minimisation} in terms of strong dependency and derived concepts. We have also formalised the notion of a {\it data minimiser} as a pre-processor that may be composed with the data processor to filter the inputs so the latter receives representative inputs for different equivalence classes yielding the same output. We have considered the cases when the input comes from a unique source (monolithic collection), and when they come from different sources (distributed collection). We have also shown properties of both kinds of data minimisers depending on when they are used (before deployment, online, or offline).

Besides the theoretical results, we have also introduced a procedure, and a proof-of-concept implementation, to obtain data minimisers for a given program. The procedure is a chain consisting of a symbolic execution of the program and the use of a SAT solver to obtain a partition of the input space depending on the different possible outputs. The procedure generates representatives for each class which are outputs of the generated data minimiser to be fed as inputs into the data processor.

Given the nature of the problem our approach is semantics-based, so finding a distributed minimiser is undecidable in general. That said, as data has different sensitivity depending on the application, and the desirable privacy level, we could restrict the analysis to specific modules, making it tractable in some cases. The approach might also have practical utility if we aim at using it in an interactive way, semi-automatically. 
One crucial issue in tackling realistic examples will be finding compositional methods. One possible compositional proof rule is the following: 
If $m$ is a minimiser for $p$, $n$ is a (distributed) minimiser for $q$ and
$\range(q) \subseteq \range(m)$ then 
$n$ is a (distributed) minimiser for $p \circ q$.

Note the special case when $p$ is minimal (and hence $m$ is the identity) for which 
the condition $\range(q) \subseteq \range(m)$ becomes vacuous. 
It remains to be explored whether this has practical utility. 

Our paper opens the way to other interesting research directions. On the theoretical side it would be worth exploring the connections to completeness in abstract interpretation as mentioned in the related work section. On the practical side we are considering the possibility of defining a type system to determine the cardinality of the needed inputs for computing a given output, and also the integration of a quantitative approach based on worth-based information theory to deal with the sensitivity of data.

\newpage

\bibliographystyle{splncs03}
\bibliography{extracted}

\section*{Appendix A: Proof of Theorem~\ref{th:distributed_minimiser_existence}}
\label{sec:appendix_proof_of_theorem_ref_th_distributed_minimiser_existence}

Let $\mathrm{ER}(D)$ denote the equivalence relations over set $D$.

We view equivalence relations as sets of pairs, writing either $a R b$ or $(a, b) \in R$ when relation $R$ relates $a$ to $b$. Note that $\mathrm{ER}(D)$ forms a complete lattice $\langle \mathrm{ER}(D), \bot, \top, \leq, \vee, \wedge \rangle$ where $\bot = \left\{ (a, a) \mid a \in D \right\}$, $\top = D \times D$, $\leq$ is subset inclusion $\subseteq$, $\vee$ is transitive closure of set union $\cup^*$, and $\wedge$ is set intersection $\cap$.

\begin{definition}
    Given $R \in \mathrm{ER}(D)$ and $S \in \mathrm{ER}(E)$ let $R \boxtimes S \in \mathrm{ER}(D \times E)$ denote the relation $(d, e) R \boxtimes S (d^{\prime}, e^{\prime}) \Leftrightarrow d R d^{\prime} \wedge e S e^{\prime}$.
\end{definition}

The lattice of equivalence relations $\mathrm{ER}(D)$ forms a complete lattice (standard) as described above.

\begin{proposition}
    \label{prop:complete_sub-lattice}
    $\left\{ \underset{{i \in \{0, \dots, n\}}}{\boxtimes} R_i \mid R_i \in \mathrm{ER}(D_i) \right\}$ is a complete sub-lattice of\\ $\mathrm{ER} \left( \underset{i \in \{0, \dots, n\}}{\prod} D_i \right)$.
\end{proposition}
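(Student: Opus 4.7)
I would prove this by verifying the two closure conditions that characterise a complete sub-lattice: the set contains $\bot$ and $\top$ of $\mathrm{ER}(\prod_i D_i)$, and it is closed under arbitrary meets and arbitrary joins computed in the ambient lattice. Containment of the extremes is immediate, since $\bot_{\prod_i D_i} = \boxtimes_i \bot_{D_i}$ and $\top_{\prod_i D_i} = \boxtimes_i \top_{D_i}$. Closure under meets is essentially a bookkeeping calculation: for any family $\{R_i^\alpha\}_{\alpha \in A}$ with $R_i^\alpha \in \mathrm{ER}(D_i)$, one checks by unfolding $\boxtimes$ that
\[
\bigcap_{\alpha \in A} \bigl(\boxtimes_i R_i^\alpha\bigr) \;=\; \boxtimes_i \Bigl(\bigcap_{\alpha \in A} R_i^\alpha\Bigr),
\]
because both sides state that, coordinate-wise, $d_i$ and $d_i'$ are related by every $R_i^\alpha$.

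The substantive step is closure under joins, namely
\[
\Bigl(\bigcup_{\alpha \in A} \boxtimes_i R_i^\alpha\Bigr)^{\!*} \;=\; \boxtimes_i \Bigl(\bigcup_{\alpha \in A} R_i^\alpha\Bigr)^{\!*}.
\]
The inclusion $\subseteq$ is a routine induction on the length of a chain in the union on the left: if $(\vec d, \vec d') \in \boxtimes_i R_i^\alpha$, then coordinate-wise $d_i$ and $d_i'$ are related by $R_i^\alpha \subseteq (\bigcup_\alpha R_i^\alpha)^*$, and transitive closure is preserved when concatenating chains coordinate-wise.

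The reverse inclusion $\supseteq$ is where the main obstacle lies, because a single step in the right-hand side lets each coordinate use a different $\alpha$, whereas a single step on the left forces all coordinates to share the same $\alpha$. The key observation that resolves this is reflexivity of equivalence relations: every $R_i^\alpha$ contains $\mathrm{Id}_{D_i}$. Given $(\vec d, \vec d')$ in the right-hand side, we have coordinate chains $d_i = a_i^0 R_i^{\alpha_{i,0}} a_i^1 R_i^{\alpha_{i,1}} \cdots R_i^{\alpha_{i,k_i-1}} a_i^{k_i} = d_i'$. I would then construct a single chain in $\prod_i D_i$ that changes one coordinate at a time: for each $i$ and each step $\ell$ of its chain, take the product step in which coordinate $i$ moves via $R_i^{\alpha_{i,\ell}}$ and every other coordinate $j \neq i$ stays fixed, which is a legal step of $\boxtimes_i R_i^{\alpha_{i,\ell}}$ thanks to reflexivity $d_j \mathrel{R_j^{\alpha_{i,\ell}}} d_j$. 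Concatenating these steps over all coordinates yields $(\vec d, \vec d')$ in $(\bigcup_\alpha \boxtimes_i R_i^\alpha)^*$, completing the argument. Together these three items establish that the set is closed under all meets and joins taken in $\mathrm{ER}(\prod_i D_i)$ and therefore forms a complete sub-lattice.
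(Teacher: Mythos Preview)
Your proof is correct and follows the same closure-verification strategy as the paper: check that $\bot$, $\top$, meets, and joins of product relations are again product relations. Your treatment of the join case is in fact more careful than the paper's terse argument. The paper asserts that plain union distributes over $\boxtimes$, i.e.\ $\boxtimes_i R_i \cup \boxtimes_i S_i = \boxtimes_i (R_i \cup S_i)$, which is literally false (take $d_0 R_0 d_0'$ but not $d_0 S_0 d_0'$, and $d_1 S_1 d_1'$ but not $d_1 R_1 d_1'$); what is true, and what is actually needed, is the corresponding identity for $\cup^*$, and your one-coordinate-at-a-time chain construction exploiting reflexivity is exactly the argument that makes the $\supseteq$ inclusion go through. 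You also handle arbitrary index sets rather than just the binary case, which is what the subsequent use of this proposition in the proof of Theorem~\ref{th:distributed_minimiser_existence} (an infinite join) genuinely requires.
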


\begin{proof}
    Note that in $\mathrm{ER}(D)$, $\bot$ is the identity on $D$, written $\mathrm{Id}_D$ and $\top$ is the universal relation $\mathrm{All}_D$.
    
    It is easy to see that $\underset{{i \in \{0, \dots, n\}}}{\boxtimes} \mathrm{Id}_{D_i} = \mathrm{Id}_{\left( \underset{{i \in \{0, \dots, n\}}}{\boxtimes} D_i \right)}$ and $\underset{{i \in \{0, \dots, n\}}}{\boxtimes} \mathrm{All}_{D_i} = \mathrm{All}_{\left( \underset{{i \in \{0, \dots, n\}}}{\boxtimes} D_i \right)}$.
    
    Similarly, it is straightforward to show that \\ $\underset{{i \in \{0, \dots, n\}}}{\boxtimes} R_i \cup \underset{{i \in \{0, \dots, n\}}}{\boxtimes} S_i = \underset{{i \in \{0, \dots, n\}}}{\boxtimes} (R_i \cup S_i)$ and similarly for $\cap$ (by observing that union, intersection, and transitive closure all act pointwise on $t$-uples).
\end{proof}

{\bf PROOF OF THE THEOREM}

We need to prove that {\it for any $\textit{dp} \in \left( \underset{i \in \{0, \dots, n\}}{\prod} I_i \right) \rightarrow O$, $\exists \textit{dm} = \underset{i \in \{0, \dots, n\}}{\otimes} m_i$ such that $\textit{dm}$ is a best distributed minimiser for $\textit{dp}$.}

\begin{proof}
    We first build a candidate $\textit{dm}$, and by assuming that it is \emph{not} a best distributed minimiser, derive a contradiction.
    
    Let $E = \left\{ \underset{{i \in \{0, \dots, n\}}}{\boxtimes} R_i \mid R_i \in \mathrm{ER}(I_i) \right\}$. Note that $E$ is a complete sub-lattice of $\mathrm{ER} \left( \underset{i \in \{0, \dots, n\}}{\prod} I_i \right)$ (by Proposition~\ref{prop:complete_sub-lattice}).
    
    Define $K = \cup^* \{ R \in E \mid R \subseteq \mathrm{ker} \, \textit{dp} \}$.
    
    By the sub-lattice property of $E$ above, $K \in E$, ie. $K = \underset{i \in \{0, \dots, n\}}{\prod} K_i$ for some $K_i \in \mathrm{ER}(I_i)$ (for $i \in \{0, \dots, n\}$) and $K \subseteq \mathrm{ker} \, \textit{dp}$ (since $\mathrm{ker} \, \textit{dp}$ is an upper bound, it is also an upper bound for the least upper bound).
    
    Now we define a candidate $\textit{dm} = \underset{i \in \{0, \dots, n\}}{\otimes} m_i$. To do so, for each equivalence class $C$ of each $K_i$ (for $i \in \{0, \dots, n\}$), choose a representative element denoted $\mathrm{repr}(C)$.
    
    Now define $m_i(v) \triangleq \mathrm{repr}\left( [v]_{K_i} \right)$, ie. $m_i$ maps a value into the representative chosen for that value's equivalence class under $K_i$.
    
    \strong{First} we show that $\textit{dm}$ is a minimiser for $P$.
    
    \begin{align}
        \textit{dp}(\textit{dm}(v)) &= \textit{dp}\left(m_0(v_0), \dots, m_n(v_n)\right) \nonumber\\
        &= \textit{dp}\left(\mathrm{repr}\left( [v_0]_{K_0} \right), \dots, \mathrm{repr}\left( [v_n]_{K_n} \right) \right)\nonumber
    \end{align}
    
    Note that $\left( \mathrm{repr}\left( [v_i]_{K_i} \right), v_i \right) \in K_i$ (for $i \in \{0, \dots, n\}$) then \\ $\left(\mathrm{repr}\left( [v_0]_{K_0} \right), \dots, \mathrm{repr}\left( [v_n]_{K_n} \right) \right)$ is related to $v$ by $K$ and hence by $\mathrm{ker} \, \textit{dp}$.
    
    Thus $\textit{dp}\left(\mathrm{repr}\left( [v_0]_{K_0} \right), \dots, \mathrm{repr}\left( [v_n]_{K_n} \right) \right) = \textit{dp}(v)$.
    
    It remains to show that $\textit{dm}$ is idempotent. This follows easily from the fact that the $m_i$ are idempotent since $\mathrm{repr}(C) \in C$ for any equivalence class of $m_i$.
    
    \strong{Finally}, assume towards a contradiction that $\textit{dm}$ thus defined is \emph{not} a best distributed minimiser for $\textit{dp}$.
    
    Let $J_i = \mathrm{range} \left( m_i \right)$. Then $\mathrm{range} \left( m_i \right) = \underset{i \in \{0, \dots, n\}}{\prod} J_i$. Then by the assumption $\textit{dp}$ is \emph{not} best distributed-minimised for $\underset{i \in \{0, \dots, n\}}{\prod} J_i$, $\exists a \in \{0, \dots, n\}$ and distinct values $u, v \in J_a$ such that $\forall \overline{u}, \overline{v}$ which differ only at position $a$, where $\overline{u}_a = u, \overline{v}_a = v$ we have $\textit{dp} \left( \overline{u} \right) = \textit{dp}\left( \overline{v} \right)$.
    
    Define $R_a = \{ (u, v), (v, u) \} \cup^* \mathrm{Id}_{J_a}$. Define $R = \underset{i \in \{0, \dots, n\}}{\boxtimes} R_i$ where $R_i = \mathrm{Id}_{J_i}$ for $i \neq a$.
    
    Thus $\left( \overline{u}, \overline{v} \right) \in R \Rightarrow \textit{dp} \left( \overline{u} \right) = \textit{dp}\left( \overline{v} \right)$ and hence $R \subseteq \mathrm{ker} \, \textit{dp} \mid_{\mathrm{range}\left( \textit{dm} \right)}$, and hence (*) $R \subseteq K$.
    
    By construction of $m_a$, since $u \neq v$, then $u$ and $v$ must not be related by $K_a$ which contradicts (*).
\end{proof}

\end{document}